\documentclass[reprint,superscriptaddress,bibnotes,amsmath,amssymb,aps,floatfix,longbibliography]{revtex4-1}

\usepackage{qcircuit}
\usepackage{amsmath,bm}
\usepackage{graphicx}
\usepackage{amsmath,amssymb,amsthm,mathrsfs,amsfonts,dsfont}
\usepackage{braket}
\usepackage{bm}
\usepackage{enumerate}
\usepackage{color}
\usepackage{graphicx}
\usepackage{algorithm}
\usepackage{algorithmic}
\usepackage{braket}
\usepackage{comment}
\usepackage{appendix}
\usepackage{here}
\usepackage{tabularx}
\usepackage{ulem}
\newtheorem{theorem}{Theorem}

\newtheorem{lemma}{Lemma}
\newtheorem{proposition}{Proposition}
\newtheorem{assumption}{Assumption}

\DeclareRobustCommand{\YMdel}{\bgroup\markoverwith{\textcolor[rgb]{0.1, 0.5, 0.1}{\rule[.5ex]{2pt}{0.4pt}}}\ULon}
\usepackage[colorlinks,linkcolor=blue,citecolor=blue]{hyperref}

\usepackage[justification=RaggedRight]{caption}
\usepackage[subrefformat=parens]{subcaption}
\captionsetup{compatibility=false}

\newcommand{\tr}{\mathrm{Tr}}

\begin{document}


\title{Localized Virtual Purification}


\author{Hideaki Hakoshima}
\email{hakoshima.hideaki.es@osaka-u.ac.jp}
\affiliation{Graduate School of Engineering Science, Osaka University, 1-3 Machikaneyama, Toyonaka, Osaka 560-8531, Japan}
\affiliation{Center for Quantum Information and Quantum Biology, Osaka University, 1-2 Machikaneyama, Toyonaka, Osaka 560-0043, Japan}

\author{Suguru Endo}
\affiliation{NTT Computer and Data Science Laboratories, NTT Corporation, Musashino 180-8585, Japan}
\affiliation{JST, PRESTO, 4-1-8 Honcho, Kawaguchi, Saitama, 332-0012, Japan}

\author{Kaoru Yamamoto}
\affiliation{NTT Computer and Data Science Laboratories, NTT Corporation, Musashino 180-8585, Japan}

\author{Yuichiro Matsuzaki}
\affiliation{Department of Electrical, Electronic, and Communication Engineering, Faculty of Science and Engineering, Chuo University, 1-13-27 Kasuga, Bunkyo-ku, Tokyo 112-8551, Japan.}

\author{Nobuyuki Yoshioka}
\email{nyoshioka@ap.t.u-tokyo.ac.jp}
\affiliation{Department of Applied Physics, University of Tokyo, 7-3-1 Hongo, Bunkyo-ku, Tokyo 113-8656, Japan}
\affiliation{Theoretical Quantum Physics Laboratory, RIKEN Cluster for Pioneering Research (CPR), Wako-shi, Saitama 351-0198, Japan}
\affiliation{JST, PRESTO, 4-1-8 Honcho, Kawaguchi, Saitama, 332-0012, Japan}



\begin{abstract}
Analog and digital quantum simulators
can efficiently simulate quantum many-body systems that appear in natural phenomena.
However, experimental limitations of near-term devices still make it challenging to perform the entire process of quantum simulation.
The purification-based quantum simulation methods can alleviate the limitations in experiments such as the cooling temperature and noise from the environment, while this method has the drawback that it requires global 
entangled measurement
with a prohibitively large number of measurements that scales exponentially with the system size.
In this Letter, we propose that we can overcome these problems by restricting the entangled measurements to the vicinity of the local observables to be measured, when the locality of the system can be exploited. 
We provide theoretical guarantees  that the global purification operation can be replaced with local operations under some conditions, in particular for the task of cooling and error mitigation.
We furthermore give a numerical verification that the localized purification is valid  even when conditions are not satisfied.
Our method bridges the fundamental concept of locality with quantum simulators, and therefore expected to open a path to unexplored quantum many-body phenomena.

\end{abstract}

\maketitle

\textit{Introduction.---} 
%
Simulating quantum many-body systems is a fundamental issue for 
quantum information science~\cite{feynman1982simulating}, 
since it potentially has a significant impact on various fields~\cite{RevModPhys.86.153} including condensed matter physics~\cite{PhysRevLett.79.2586,doi:10.1126/science.1207239,babbush2023exponential}, statistical physics~\cite{Trotzky_2012,doi:10.1126/science.aaf6725,doi:10.1126/science.aaf8834,rosenberg2023dynamics}, quantum chemistry~\cite{doi:10.1126/science.1113479,hastings2014improving,RevModPhys.92.015003}, and high-energy physics~\cite{doi:10.1126/science.1217069,PhysRevX.3.041018,PRXQuantum.3.020324}.
In particular,  simulation of thermal equilibrium states, ground states, and non-equilibrium dynamics for quantum many-body Hamiltonians has attracted attention as a valuable application, since it is believed to be an exponentially difficult task on a classical computer. 
This has motivated the recent progress in quantum simulations using cold atoms in an optical lattice~\cite{greiner2002quantum,Simon_2011,doi:10.1126/science.aal3837,Sch_fer_2020,doi:10.1126/science.abo6587}, nitrogen-vacancy centers in diamond~\cite{abobeih2022fault}, photonic devices~\cite{madsen2022quantum}, and superconducting qubits~\cite{PhysRevResearch.5.013183,kim2023evidence}.


While it remains a challenge to perform all quantum tasks in the current quantum devices,
it has been proposed that purification-based quantum simulation enables us to break the limitations in experiments.
The key idea is to enhance the purity of a quantum state in a virtual way by utilizing the classical post-processing, rather than directly realizing the purified quantum state.
More specifically, one computes the expectation value
$\braket{\bullet}_{\rm FVP}=\tr[\rho^{(n)}_{\rm FVP} \bullet ]$ corresponding to $\rho^{(n)}_{\rm FVP}=\rho^n/ \tr{[\rho^n]}$ (which is denoted as Fully Virtual Purification (FVP) throughout this paper) from an original quantum state $\rho$ using $n$ copies.
It has been pointed out that, such an operation is capable of (i) simulating the canonical Gibbs state of temperature $T/n$ using that of $T$~\cite{PhysRevX.9.031013}, and (ii) suppressing the effect of noise in the context of quantum error mitigation~\cite{PhysRevX.11.031057,PhysRevX.11.041036,https://doi.org/10.48550/arxiv.2210.10799,PhysRevLett.129.020502,PhysRevX.9.031013,https://doi.org/10.48550/arxiv.2210.10799,ohkura2023leveraging, Endo_2021,cai2022quantum,qin2022overview,PhysRevLett.119.180509,PhysRevX.8.031027,kandala2019error}. 
However, these methods require multiple entangled measurement gates
that act globally among multiple copies. This imposes a severe burden on the computation; non-local entangling gates among copies and an exponentially large number of measurements.
It is crucial to seek whether we can alleviate the overhead of purification-based methods in a way that the accuracy of the simulation is maintained. 

One promising direction is to utilize the geometrical locality of target models, which is present ubiquitously in condensed matter systems. In particular, the locality of interaction yields an upper bound on the velocity of information propagation; the Lieb-Robinson bound~\cite{lieb28finite,PhysRevB.69.104431,Nachtergaele_2006,hastings2010locality}. Recent works show that this powerful bound can be applied to yield various fundamental limits such as the finite correlation length of a gapped ground state~\cite{PhysRevB.69.104431,Nachtergaele_2006,hastings2010locality} and approximation of time-evolution unitary in the interaction picture~\cite{Haah_2021,PhysRevX.9.031006,PhysRevX.11.011020}.
Meanwhile, to our knowledge, there are very few frameworks of practical quantum algorithms other than Hamiltonian simulation~\cite{Haah_2021,PhysRevX.9.031006,PhysRevX.11.011020, PRXQuantum.3.040302} that incorporate the notion of geometrical locality. This implies that we are not fully harnessing the capacity of the quantum simulators for practical use. 

In this Letter, we fill in these gaps by  proposing the Localized Virtual Purification (LVP) as a virtual purification on local subsets of qubits, and present theoretical guarantees and conditions that the method overcomes the problems in FVP when the locality of the system can be exploited. 
While the proposal itself was mentioned in the original paper as a task for cooling~\cite{PhysRevX.9.031013}, our contribution is to clarify the conditions of the theoretical bounds and to show a practical advantage to an effect of noise among the entanglement measurement operations.
While the output from the LVP generally deviates from that of the FVP 
due to a non-negligible correlation between purified and unpurified regions,
we find that the deviation can be written as a generalized correlation function.
In particular, this reduces to the two-point correlation function in some cases including cooling and error mitigation. Therefore, if we further assume the clustering property, i.e., the exponential decay of two-point correlation, we can derive two bounds that assure the accuracy of the LVP for these tasks.
Finally, we verify our analytical results via numerical simulation, and also find that the LVP is capable of unifying the two tasks, namely the  simulation of low-temperature Gibbs states from  noisy high-temperature states.
\begin{figure}[t!]
    \centering
    \includegraphics[width=0.48\textwidth]{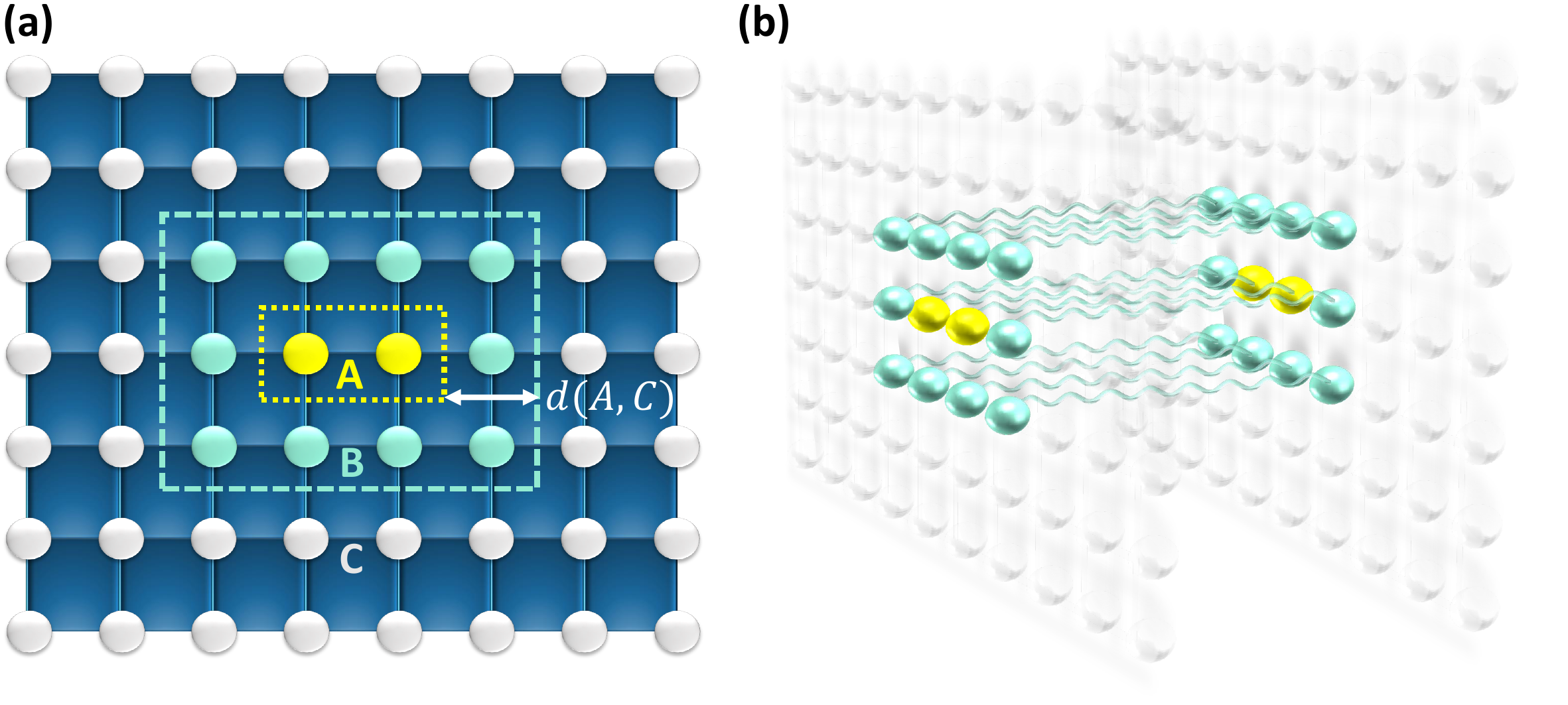}
    \caption{Schematic description of our proposal, the Localized Virtual Purification (LVP), assuming two-dimensional system with $n=2$ copies of quantum state. (a) Dividing the entire lattice into three regions $A$, $B$, and $C$ according to the local observable $o_A$ and the distance $d(A,C)=1$.
    The number of sites inside the regions $A$, $B$, and $C$ are $N_A$, $N_B$ and $N_C$, respectively, and the total number of sites is $N(=N_A+N_B+N_C)$. (b) 
    The  entangled measurement operation  applied to regions $A$ and $B$  to perform the LVP.
    }
    \label{Fig:fig1}
\end{figure}

\textit{Setup.---}
Let the Hamiltonian be defined on a $d$-dimensional hypercubic lattice with the periodic boundary condition which is represented as an undirected graph $G=(V,E)$, where the number of vertices $|V|$ is equivalent to the total number of sites (or qubits) $N$ and the edges $E$ denote the connectivity of each site. 
We assume that the interactions in the Hamiltonian are geometrically local:
$H=\sum_{\bm X} h_{\bm X}$ satisfying $\max_{v\in V}\sum_{{\bm X}:{\bm X}\ni v} \|h_{\bm X}\| \le g$, where ${\bm X}$ is a subset of $V$ and $g$ is a positive constant independent of $N$.
In the following, we will write each term of the Hamiltonian
as $h_{A_i}$ with a subset $A_{i}$ denoting one vertex $i$ on the support,
as long as there is no confusion. 
We assume a spin-$1/2$ system that directly corresponds to qubits, while we expect that our theory can be naturally extended to higher-spin, fermionic, and bosonic systems, since the underlying mechanism of cluster property is commonly present.
In fact, we numerically analyze two-dimensional fermionic systems in the Supplementary Materials (SM)~\cite{Supplement_localized_virtual_purification}.

\textit{Localized virtual purification.---}
We define the expectation values of LVP by
\begin{align}
    \braket{\mathcal{O}}^{(n)}_{\rm LVP}&=\sum_i \frac{\mathrm{Tr}_{A_i+B_i}\left[ \left(\rho^{(A_i+B_i)}_i\right)^n o_{A_i}\right]}{\mathrm{Tr}_{A_i+B_i}\left[ \left(\rho^{(A_i+B_i)}_i\right)^n \right]},
    \label{eq:partialvirtual}
\end{align}
where $\mathcal{O}$ is an observable that can be decomposed into a sum of local observables as $\mathcal{O}=\sum_i o_{A_i}$.
Here, $\rho^{(A_i+B_i)}_i=\tr_{C_i}[\rho]$ is the reduced density operator on regions $A_i$ and $B_i$.
To be more specific, we divide all sites into three regions (see Fig.~\ref{Fig:fig1} (a)): $A_i$ is a region of the support of a local observable $o_{A_i}$ and we perform the entangled measurement operations between $n$ copies in the $A_i$ and $B_i$ regions, while we do not perform any operations on the region $C_i$.
Without loss of generality, we may focus on an arbitrary single term $o_{A_i}$ and therefore we omit the index $i$ for simplicity.
Note that the expectation values in Eq.~(\ref{eq:partialvirtual}) can be evaluated via LVP by operating an entangled measurement circuit on the region $A\cup B$
as shown in Fig.~\ref{Fig:fig1} (b). This results in the measurement costs of $\mathrm{Tr}_{A+B}[(\rho^{(A+B)}) ^n]^{-2}$~\cite{Supplement_localized_virtual_purification},
where measurement costs denote the increase of the variance for estimating the expectation values. The increase implies the greater number of measurements required.

In general, the estimation by the LVP is not equivalent to that of the FVP. 
Meanwhile, we find that we can understand the performance of LVP qualitatively by rewriting the difference of the expectation values  $D^{(n)}(o_A)=\frac{\mathrm{Tr}_{A+B}[(\rho^{(A+B)})^n o_A]}{\mathrm{Tr}_{A+B}[(\rho^{(A+B)})^n]}-\frac{\mathrm{Tr}[\rho^n o_A]}{\mathrm{Tr}[\rho^n]}$ into the integration of the generalized correlation function as 
\begin{align}
    D^{(n)}(o_A)
    &= \int_0^1 d\lambda\int_0^1 d\tau ~\mathrm{Corr}_{\rho_\lambda}^\tau (X_n-Y_n,o_A),    \label{eq:noisytwopointcorrelationfunction}
\end{align}
where we have introduced a generalized correlation function
    $\mathrm{Corr}_\rho^\tau (O,O') =\mathrm{Tr}[\rho^\tau O \rho^{1-\tau}O'] - \mathrm{Tr}[\rho O]\mathrm{Tr}[\rho O']$. This is also referred to as the two-point quantum correlation function~\cite{PhysRevLett.117.130401,Chiara_2018} or canonical correlation for thermal Gibbs states~\cite{Kubo_1966,kubo2012statistical,PhysRevA.94.062316} in literature.
We have also defined $\rho_\lambda=e^{ H_\lambda}/\tr[e^{ H_\lambda}] $, 
    $H_\lambda = Y_n +\lambda(X_n-Y_n)$
, $Y_n =\log{\left(\rho^n \right)}$, and
$X_n =\log{\left[\left(\rho^{(A+B)}\right)^n \otimes (\sigma_C)^n\right]}$, where $\sigma_C$ is an arbitrary positive operator (see SM for derivation~\cite{Supplement_localized_virtual_purification}).
As we discuss in detail in the following, we find 
that the generalized correlation function reduces to the two-point correlation function for some cases, and thus it decays exponentially if the exponential clustering property holds.

\textit{Theoretical bounds for LVP.---} Now let us present our main results: the performance guarantees and their conditions for the LVP.
We first discuss one of the most important cases, namely the cooling of thermal equilibrium states
\begin{align}
    \rho_{\beta}(H) = \frac{e^{-\beta H}}{\tr[e^{-\beta H}]},
    \label{eq:gibbs}
\end{align}
where $\beta$ is an inverse temperature. 
In this case, we can rewrite Eq.~\eqref{eq:noisytwopointcorrelationfunction} as
    $D^{(n)}(o_A)
    = -n\beta\int_0^1 d\lambda\int_0^1 d\tau ~\mathrm{Corr}_{\rho_\lambda}^\tau (\Delta H,o_A)$,
because simple relations $X_n =-n\beta (H+\Delta H)$ and $
    Y_n =-n\beta H$ hold for $\Delta H$ that is introduced to describe the deviation of the Hamiltonian from a Gibbs state $\rho_{\beta}(H^{(A+B)})$ tracing out the region $C$.
Here, we assume that $\Delta H$ can be approximated by a local operator supported on the boundary of the region $A\cup B$ with exponentially small error, as shown  in SM~\cite{Supplement_localized_virtual_purification}.
Note that although previous results have proved the above assumption based on a generalized linked-cluster expansion in order to evaluate $H+\Delta H$~\cite{PhysRevLett.124.220601,anshu2021sample,KUWAHARA2020168278}, which is related to the Hamiltonian of mean force for strong coupling systems~\cite{10.1063/1.1749657,Chris_Jarzynski_2004,PhysRevLett.102.210401,10.1116/5.0073853}, a flaw in its proof was pointed out later~\cite{kuwahara2024qip}.
While the proof is to be fixed, we reasonably expect that the locality of $\Delta H$ is valid in high temperature regime, as proven in commuting Hamiltonians~\cite{bluhm2024strong}. 
By noting that the support of $o_A$ is separate from that of $\Delta H$ under this assumption,
we prove the following theorem:
\begin{theorem}\label{Theorem1}(informal)
Assuming the exponential clustering of the two-point correlation function~\cite{araki1969gibbs,1995JSP....80..223P,ueltschi2005cluster,PhysRevX.4.031019,Fr_hlich_2015}, 
the deviation $|D^{(n)}(o_A)|$ is exponentially small in terms of $d(A,C)$:
\begin{align}
    &|D^{(n)}(o_A)|=O(e^{-d(A,C)}).
    \label{Eq:exponentialclusteringGibbs}
\end{align}
\end{theorem}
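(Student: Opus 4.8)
The plan is to start from the exact expression for the deviation in the Gibbs case, namely
\begin{align}
    D^{(n)}(o_A) = -n\beta\int_0^1 d\lambda\int_0^1 d\tau~\mathrm{Corr}_{\rho_\lambda}^\tau(\Delta H, o_A),\nonumber
\end{align}
and bound the magnitude of the integrand uniformly in $\lambda$ and $\tau$. First I would invoke the locality assumption on $\Delta H$: up to an exponentially small correction in $d(A,C)$ (which I absorb into the final $O(e^{-d(A,C)})$), $\Delta H$ is supported on a thin collar around the boundary of $A\cup B$, hence its support is disjoint from and at distance $\gtrsim d(A,C)$ from the support of $o_A$. I would also need that $\|\Delta H\|$ is bounded independently of $N$ — this follows because $\Delta H$ is a sum of $O(N_{\partial(A\cup B)})$ geometrically local terms of bounded norm, so the prefactor $n\beta\|\Delta H\|$ contributes at most a constant (for fixed $n$, $\beta$, and region geometry).

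Next, the core step: for each fixed $\lambda,\tau$, the reference state $\rho_\lambda = e^{H_\lambda}/\tr[e^{H_\lambda}]$ with $H_\lambda = -n\beta(H + \lambda\Delta H)$ is itself a Gibbs state of a geometrically local Hamiltonian (with inverse temperature $n\beta$), since adding $\lambda\Delta H$ only perturbs the boundary terms and keeps the interaction strength $g$-bounded uniformly in $\lambda$. Therefore I may apply the assumed exponential clustering of the two-point/canonical correlation function to $\rho_\lambda$: $|\mathrm{Corr}_{\rho_\lambda}^\tau(\Delta H, o_A)| \le C\,\|\Delta H\|\,\|o_A\|\,e^{-\mathrm{dist}(\mathrm{supp}\,\Delta H,\,\mathrm{supp}\,o_A)/\xi}$ for a correlation length $\xi$ and constant $C$ uniform over $\lambda\in[0,1]$, $\tau\in[0,1]$. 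Since $\mathrm{dist}(\mathrm{supp}\,\Delta H, \mathrm{supp}\,o_A) \ge d(A,C) - O(1)$ (the collar width and the size of $A$ are $O(1)$), this gives an $e^{-d(A,C)/\xi}$ bound on the integrand.

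Finally I would integrate over the compact domain $[0,1]^2$, which only multiplies the bound by $1$, and combine with the exponentially small error from the $\Delta H$-locality approximation to conclude $|D^{(n)}(o_A)| = O(e^{-d(A,C)})$, with the implicit constant depending on $n$, $\beta$, $g$, $\|o_A\|$, the region geometry, and $\xi$ but not on $N$. The informal statement hides the correlation length inside the $O(\cdot)$, so I would state the clean version and relegate the explicit constants to the SM.

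The main obstacle is ensuring that the exponential clustering bound can be applied \emph{uniformly} along the interpolation path $\{\rho_\lambda\}_{\lambda\in[0,1]}$: one must check that every $\rho_\lambda$ is a Gibbs state of a Hamiltonian in the class for which clustering is assumed (same locality structure, interaction strength bounded by a $\lambda$-independent constant, same temperature regime), so that the correlation length $\xi$ and constant $C$ can be chosen independently of $\lambda$. A secondary subtlety is the $\Delta H$-locality input itself, which — as the text candidly notes — rests on results whose proofs are currently being repaired; the honest move is to treat the locality and boundedness of $\Delta H$ as a hypothesis (rigorously available at least in the high-temperature and commuting-Hamiltonian regimes) and make the theorem's dependence on it explicit.
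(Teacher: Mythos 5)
Your overall architecture matches the paper's proof: start from $D^{(n)}(o_A)=-n\beta\int_0^1 d\lambda\int_0^1 d\tau\,\mathrm{Corr}_{\rho_\lambda}^{\tau}(\Delta H,o_A)$, replace $\Delta H$ by a boundary-localized $\Delta H_{\rm local}$ up to an exponentially small error, note that $\rho_\lambda$ remains a Gibbs state of a short-range Hamiltonian uniformly in $\lambda$, and integrate over the compact domain. However, there is one genuine gap at the core step. The theorem's hypothesis (see the formal version, Theorem 3 in the SM) is exponential clustering of the \emph{static} two-point correlation function, $|\braket{O_AO_C}_{n\beta}-\braket{O_A}_{n\beta}\braket{O_C}_{n\beta}|\le c'\|O_A\|\|O_C\|e^{-d(A,C)/\xi'}$. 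The integrand, by contrast, is the $\tau$-deformed (imaginary-time-displaced) object $\mathrm{Tr}[\rho_\lambda^{\tau}\,\Delta H\,\rho_\lambda^{1-\tau}o_A]-\mathrm{Tr}[\rho_\lambda \Delta H]\mathrm{Tr}[\rho_\lambda o_A]$. You apply the clustering bound directly to this $\tau$-dependent quantity, which amounts to assuming a strictly stronger hypothesis: clustering of the generalized correlation function, uniformly in $\tau$. That does not follow immediately from static clustering. The paper closes this gap with a separate, nontrivial input (Proposition 2, quoted from Theorem 13 of Ref.~\cite{PhysRevX.12.021022}), which shows that for Gibbs states of short-range Hamiltonians at any temperature, $|\mathrm{Tr}[\rho^{\tau}O_A\rho^{1-\tau}O_C]-\mathrm{Tr}[\rho O_AO_C]|$ is itself $O(e^{-d(A,C)})$; only after this reduction is the assumed static clustering invoked. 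Without that reduction (or an explicit strengthening of the hypothesis to cover all $\tau$), your bound on the integrand is unsupported.

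A secondary, more minor point: when you ``absorb'' the $\Delta H\to\Delta H_{\rm local}$ replacement into the final error, note that $\Delta H$ appears not only as an argument of the correlation function but also inside $H_\lambda$, i.e., inside the exponential defining the reference state $\rho_\lambda$. Controlling that replacement requires a perturbation bound on Gibbs states of the form $\|\rho_\beta(H_1)-\rho_\beta(H_2)\|_1\le 2\beta\|H_1-H_2\|$ (the paper's Lemma 1), not merely an operator-norm bound on the Hamiltonian difference. Your treatment of uniformity in $\lambda$ and of the $N$-independence of $\|\Delta H_{\rm local}\|$ is consistent with the paper, and your caveat about the unproven status of the $\Delta H$-locality assumption mirrors the paper's own Assumption 1 discussion.
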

The formal statement and its proof are shown in SM~\cite{Supplement_localized_virtual_purification}.
%
%
One of the most outstanding points of our LVP protocol applied to Gibbs states is the suppression of measurement costs: 
    $\tr_{A+B}\left[\left(\rho^{(A+B)}\right)^n\right]^{-2}\simeq\exp{[2n\beta(N_A+N_B)(f_{n\beta}-f_{\beta})]}$,
where $f_\beta = -\frac{1}{(N_A+N_B)\beta}\mathrm{Tr}[e^{-\beta H_{A+B}}]$ is a free energy density of the Hamiltonian $H^{(A+B)}$ at an inverse temperature $\beta$.
This implies that the measurement costs of LVP are exponentially small regarding $N_C$ compared with those of FVP~\cite{Supplement_localized_virtual_purification}. Our LVP protocol for Gibbs states induces an exponentially small bias from FVP regarding $d(A,C)$, while it exponentially reduces the measurement costs.
Whereas it seems that the above assumption regarding $\Delta H$ only hold for an extremely high temperature region,
 as we later confirm in numerical simulations, this is expected to hold even for a low temperature region
 ~\cite{one_myfoot}.

Next, we discuss another interesting application of the LVP---error mitigation. Here, we first provide a theoretical bound when $\rho$ is pure, and then  present results when the noise is present. 

When the target state is pure, the deviation $|D^{(n)}(o_A)|$ can be written as a simpler form of the two-point correlation function:
    $D^{(n)}_0(o_A)=\mathrm{Tr}_{\rm A+C}[(\rho^{(A+C)}-\rho^{(A)}\otimes \rho^{(C)})o_A \otimes(\rho^{(C)})^{n-1}/\mathrm{Tr}[(\rho^{(C)})^n]]$ \cite{Supplement_localized_virtual_purification}.
Here, $D^{(n)}_0(o_A)$ denotes the deviation $D^{(n)}(o_A)$ for a pure state $\rho_0$, and $\rho^{(A)}$ and $\rho^{(C)}$ denote the reduced density operator of the region $A$ and $C$.
Regarding the deviation $D^{(n)}(o_A)$ for pure states and noisy states under global depolarizing noise channel, we prove the following theorem.
\begin{theorem}\label{Theorem2}
If the ground state is unique with a finite energy gap between the first excited states, the deviation for a noiseless pure state $\rho_0$ can be rewritten as
\begin{align}
    |D^{(n)}_0(o_A)|\le c \|o_A\| \frac{\|(\rho^{(C)})^{n-1}\|}{\mathrm{Tr}[(\rho^{(C)})^n]} \exp{\left(-\frac{d(A,C)}{\xi}\right)},
    \label{Eq:exponentialclustering}
\end{align}
where $c$ and $\xi$ are constants independent of $N$.
In particular, if the system of interest is a one-dimensional system, the term $\|(\rho^{(C)})^{n-1}\|/\mathrm{Tr}[(\rho^{(C)})^n]$ can be bounded by a constant independent of $d(A,C)$ for any $n$, which leads to $|D^{(n)}_0(o_A)|\le c' \|o_A\| \exp{\left(-d(A,C)/\xi\right)}=O(e^{-d(A,C)})$, where $c'$ is a constant independent of $N$.
Furthermore, the deviation for the noisy ground state under the global depolarizing noise can be bounded as $|D^{(n=2)}(o_A)|
    \le |D^{(n=2)}_0(o_A)|+ \left|\delta_1\right| + \left|\delta_2\right|$, and $\delta_1=O(e^{-N})$ and $\delta_2 =O(e^{-(N_A+N_B)})$.
    The global depolarizing noise channel is defined by $\mathcal{D}[\rho]=(1-p)\rho+p \frac{\mathbb{I}}{2^N}$,
where $p$ ($0\le p\le 1$) is the error rate.
\end{theorem}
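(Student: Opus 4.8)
\textit{Proof strategy.---}
The statement collects four claims, treated in turn. For the bound~\eqref{Eq:exponentialclustering} I start from the closed form of $D^{(n)}_0(o_A)$ quoted above and apply the trace/operator-norm H\"older inequality $|\mathrm{Tr}[XY]|\le\|X\|_1\|Y\|$ with $X=\rho^{(A+C)}-\rho^{(A)}\otimes\rho^{(C)}$ and $Y=o_A\otimes(\rho^{(C)})^{n-1}/\mathrm{Tr}[(\rho^{(C)})^n]$. Since $\|Y\|=\|o_A\|\,\|(\rho^{(C)})^{n-1}\|/\mathrm{Tr}[(\rho^{(C)})^n]$, everything reduces to bounding $\|\rho^{(A+C)}-\rho^{(A)}\otimes\rho^{(C)}\|_1$. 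For a unique gapped ground state this is precisely the exponential-clustering property: expanding a norm-one test operator on $A\cup C$ in a Pauli basis on $A$ rewrites that trace norm as a sum of at most $4^{|A|}$ connected two-point correlators between $A$ and $C$, each bounded by $c\,e^{-d(A,C)/\xi}$ via the exponential decay of correlations in gapped ground states (Hastings--Koma, Nachtergaele--Sims); as $|A|=O(1)$ the combinatorial prefactor $4^{|A|}$ is $N$-independent, which is Eq.~\eqref{Eq:exponentialclustering}.

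For the one-dimensional refinement I denote the eigenvalues of $\rho^{(C)}$ by $\{\lambda_j\}$ and use $\|(\rho^{(C)})^{n-1}\|/\mathrm{Tr}[(\rho^{(C)})^n]=\lambda_{\max}^{n-1}/\sum_j\lambda_j^{n}\le 1/\lambda_{\max}=e^{S_\infty(\rho^{(C)})}\le e^{S_1(\rho^{(C)})}$, the last inequality being the monotonicity of R\'enyi entropies in their index. In one dimension a gapped ground state obeys the area law, so $S_1(\rho^{(C)})$ is bounded by a constant independent of $|C|$ and hence of $d(A,C)$; absorbing $e^{S_1}$ into the constant gives $|D^{(n)}_0(o_A)|=O(e^{-d(A,C)})$ for every $n$.

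For the depolarizing case with $n=2$ I write $\tilde\rho=\mathcal{D}[\rho_0]=(1-p)\rho_0+p\,2^{-N}\mathbb{I}$, whose reduction $\mathrm{Tr}_C[\tilde\rho]=(1-p)\rho_0^{(A+B)}+p\,2^{-(N_A+N_B)}\mathbb{I}$ is the same depolarizing map acting on $A\cup B$. Using $\rho_0^2=\rho_0$ I expand $\tilde\rho^2$ and $(\tilde\rho^{(A+B)})^2$, and set $\delta_1=\langle o_A\rangle_{\rho_0}-\mathrm{Tr}[\tilde\rho^2 o_A]/\mathrm{Tr}[\tilde\rho^2]$ and $\delta_2=\mathrm{Tr}_{A+B}[(\tilde\rho^{(A+B)})^2 o_A]/\mathrm{Tr}_{A+B}[(\tilde\rho^{(A+B)})^2]-\mathrm{Tr}_{A+B}[(\rho_0^{(A+B)})^2 o_A]/\mathrm{Tr}_{A+B}[(\rho_0^{(A+B)})^2]$; since for a pure state $D^{(2)}_0(o_A)=\mathrm{Tr}_{A+B}[(\rho_0^{(A+B)})^2 o_A]/\mathrm{Tr}_{A+B}[(\rho_0^{(A+B)})^2]-\langle o_A\rangle_{\rho_0}$, the identity $D^{(2)}(o_A)=D^{(2)}_0(o_A)+\delta_1+\delta_2$ holds by construction. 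For $\delta_1$ the numerator of $\mathrm{Tr}[\tilde\rho^2 o_A]/\mathrm{Tr}[\tilde\rho^2]-\langle o_A\rangle_{\rho_0}$ collapses to $\frac{p^2}{2^{2N}}\mathrm{Tr}[o_A]-\frac{p^2}{2^N}\langle o_A\rangle_{\rho_0}$, of modulus $\le 2p^2\|o_A\|\,2^{-N}$, over a denominator $\ge(1-p)^2$, so $|\delta_1|=O(e^{-N})$. For $\delta_2$, every term produced by the maximally-mixed part of $\tilde\rho^{(A+B)}$ carries at least one factor $2^{-(N_A+N_B)}$ once one uses $\mathrm{Tr}_{A+B}[\mathbb{I}]=2^{N_A+N_B}$ and $|\mathrm{Tr}_{A+B}[o_A]|\le 2^{N_A+N_B}\|o_A\|$, while the denominator is at least $(1-p)^2\,\mathrm{Tr}_{A+B}[(\rho_0^{(A+B)})^2]$; hence $|\delta_2|=O\!\left(2^{-(N_A+N_B)}\|o_A\|/\mathrm{Tr}_{A+B}[(\rho_0^{(A+B)})^2]\right)$, and the area law for the gapped ground state---$\mathrm{Tr}_{A+B}[(\rho_0^{(A+B)})^2]=e^{-S_2(\rho_0^{(A+B)})}$ with $S_2\le S_1$ controlled by the perimeter of $A\cup B$, which is sublinear in $N_A+N_B$ for a bulky region and bounded in $d=1$---upgrades this to $|\delta_2|=O(e^{-(N_A+N_B)})$. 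Combining the three pieces with the triangle inequality gives the stated bound.

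The two H\"older/eigenvalue manipulations are routine, as is the $\delta_1$ estimate. The main obstacle is the $\delta_2$ estimate: the crude expansion only yields $|\delta_2|\lesssim 2^{-(N_A+N_B)}/\mathrm{Tr}_{A+B}[(\rho_0^{(A+B)})^2]$, so one genuinely has to invoke the finite-gap hypothesis---through the area law, or automatically in one dimension---to prevent the purity of the reduced ground state from decaying as fast as the numerator, and one must keep $p$ bounded away from $1$ so that every denominator remains $\Theta(1)$.
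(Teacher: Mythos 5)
Your proposal is correct and follows essentially the same route as the paper: the same identification of $D_0^{(n)}(o_A)$ as a connected $A$--$C$ correlator bounded by the exponential clustering of a unique gapped ground state, the same $\|(\rho^{(C)})^{n-1}\|/\mathrm{Tr}[(\rho^{(C)})^n]\le 1/\lambda_{\max}\le e^{S(\rho^{(C)})}$ area-law argument in one dimension, and the same three-term decomposition $D^{(2)}=D_0^{(2)}+\delta_1+\delta_2$ with the order-by-order expansion in $p$ (including the need for $\mathrm{Tr}_{A+B}[(\rho_0^{(A+B)})^2]=\Omega(1)$ and $p$ bounded away from $1$) for the depolarizing case. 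The only cosmetic difference is in the first step, where you bound $\|\rho^{(A+C)}-\rho^{(A)}\otimes\rho^{(C)}\|_1$ via a Pauli decomposition on $A$ and then apply H\"older, whereas the paper substitutes $M_A=o_A$ and $M_C=(\rho^{(C)})^{n-1}/\mathrm{Tr}[(\rho^{(C)})^n]$ directly into the clustering inequality; this costs you an extra $4^{|A|}=O(1)$ factor but changes nothing.
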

\begin{proof}
The nontrivial part of this theorem has been done by explicitly showing the expression of Eq.~(\ref{eq:noisytwopointcorrelationfunction}) and rewriting Eq.~(\ref{eq:noisytwopointcorrelationfunction}) as the two-point correlation function for pure state cases. 
The proof can be done by applying the exponential clustering property~\cite{two_myfoot},
which is derived from Lieb-Robinson bounds using the Fourier transformation~\cite{PhysRevB.69.104431,Nachtergaele_2006,hastings2010locality}:
    $|\tr[\rho^{(A+C)}M_A\otimes M_C ]-\tr[\rho^{(A)}M_A]\tr[\rho^{(C)}M_C ]|
    \le c \|M_A\|\|M_C\| \exp{\left(- d(A,C)/\xi\right)}$.
When we choose $M_A=o_A$ and $M_C=(\rho^{(C)})^{n-1}/\mathrm{Tr}[(\rho^{(C)})^n]$, the correlation function is equivalent to $D^{(n)}_0(o_A)$.
For one-dimensional systems, the area law of entanglement entropy $S(\rho_C):=-\tr[\rho^{(C)} \log{\rho^{(C)}}]$ holds; $S(\rho^{(C)}) \le {\rm const}$., as shown in Ref.~\cite{Hastings_2007}.
This implies that the term $\|(\rho^{(C)})^{n-1}\|/\mathrm{Tr}[(\rho^{(C)})^n]$ can be bounded by a constant value independent of $d(A,C)$ for any $n$~\cite{Supplement_localized_virtual_purification}, which leads to $|D^{(n)}_0(o_A)|=O(e^{-d(A,C)})$. For the case of the noisy ground state under the global depolarizing noise channel~\cite{three_myfoot}, the derivation of the inequality and the explicit forms of $\delta_1$ and $\delta_2$ are shown in SM~\cite{Supplement_localized_virtual_purification}.
\end{proof}

\textit{Numerical simulations.---} 
Next, we present the results of numerical simulation to justify our expectation that the LVP is widely valid, even when the conditions for theoretical guarantees do not hold such as the cases 
of a low temperature region
 or gapless (critical) systems~\cite{Supplement_localized_virtual_purification}.
For numerical simulations, we consider the transverse-field Ising (TFI) Hamiltonian on one-dimensional periodic lattice as
\begin{align}
    H_{\rm TFI}=-\sum_{i=1}^N Z_iZ_{i+1} -\lambda \sum_{i=1}^N X_i,
    \label{eq:transverseising}
\end{align}
where $X_i$ and $Z_i$ denote the Pauli-X and Z operators acting on the $i$-th site, and $\lambda$ is the amplitude of the transverse magnetic field.

\begin{figure}[t!]
    \centering
  \includegraphics[width=0.45\textwidth]{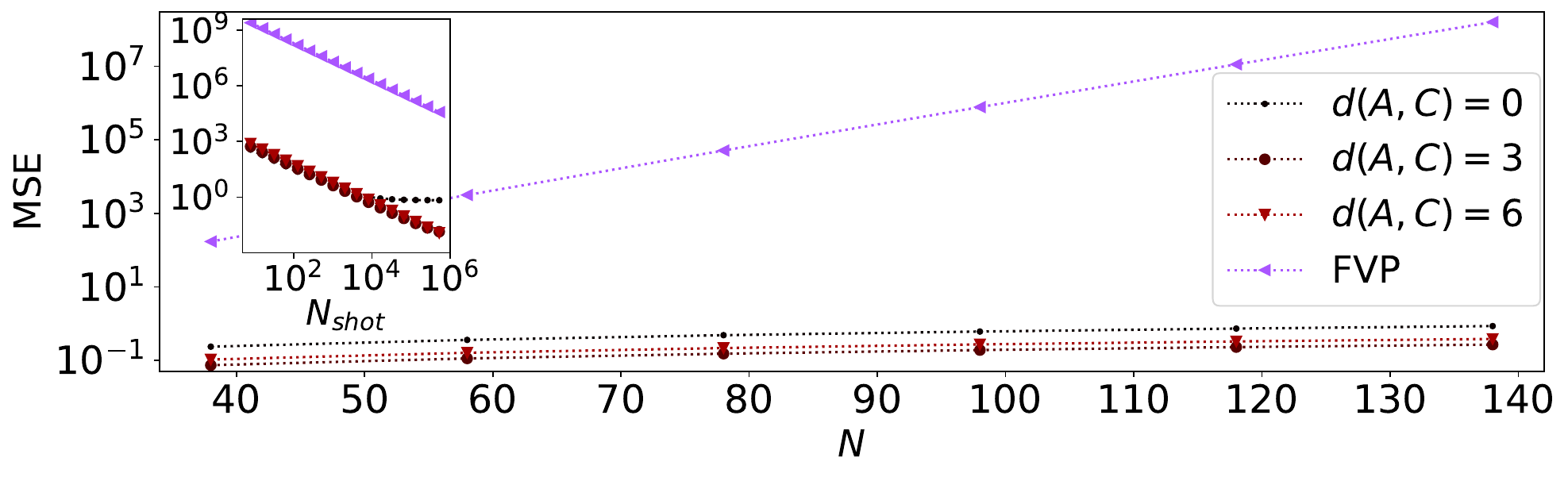}
  \caption{MSE $\chi_{\rm LVP}$ and $\chi_{\rm FVP}$ against the number of qubits $N$ for various $d(A,C)$ at a temperature $\beta=1.0$ and $N_{\rm shot}=2^{14}$. (inset) MSE against $N_{\rm shot}$ for various $d(A,C)$ with $N=138$.}
  \label{RatioMSE_virtualcooling}
\end{figure}
Firstly, we evaluate the performance of the LVP in the context of cooling. As a metric of the performance of LVP, we introduce the mean square error (MSE) $\chi$~\cite{Cai_2021} of the energy of the low-temperature state $\rho_{n \beta}(H_{\rm TFI})$ that is estimated using the high-temperature state $\rho_{\beta}(H_{\rm TFI})$. 
The MSE $\chi$ is composed of two contributions, namely the variance obtained by $N_{\rm shot}$ 
and the bias 
from the true value, 
and also the square root MSE 
can be understood as the estimation precision under $N_{\rm shot}$ measurements.

Figure~\ref{RatioMSE_virtualcooling} shows the MSE $\chi_{\rm LVP}$ and $\chi_{\rm FVP}$
against the number of qubits $N$ for various $d(A,C)$.
We find that 
our LVP protocol only requires the number of measurements proportional to $N$, while the original FVP suffers the exponential number of measurements.
We can also see that $\chi_{\rm FVP},\chi_{\rm LVP} \propto (N_{\rm shot})^{-1}$ widely holds, which implies that the contribution from the bias is negligible unless $N_{\rm shot}$ is taken to be a significantly large number.

Secondly, we verify the performance of the LVP in the context of error mitigation.
Concretely,
we prepare the ground state of $H_{\rm TFI}$ in Eq.~(\ref{eq:transverseising}) at $\lambda=2$ (non-critical) so that two-point correlation functions as well as Eq.~(\ref{Eq:exponentialclustering})  decay exponentially.
We consider the single-qubit local depolarizing noise, often employed to explain the experimental results~\cite{2019Natur.574..505A}, which is defined by
$\mathcal{E}^{(k)}[\rho]=(1-p)\rho+\frac{p}{3}(X_k\rho X_k+Y_k\rho Y_k+Z_k\rho Z_k)$ for the $k$-th qubit, where $p$ is the error rate.

%
%

\begin{figure}[t!]
    \centering
  \includegraphics[width=0.45\textwidth]{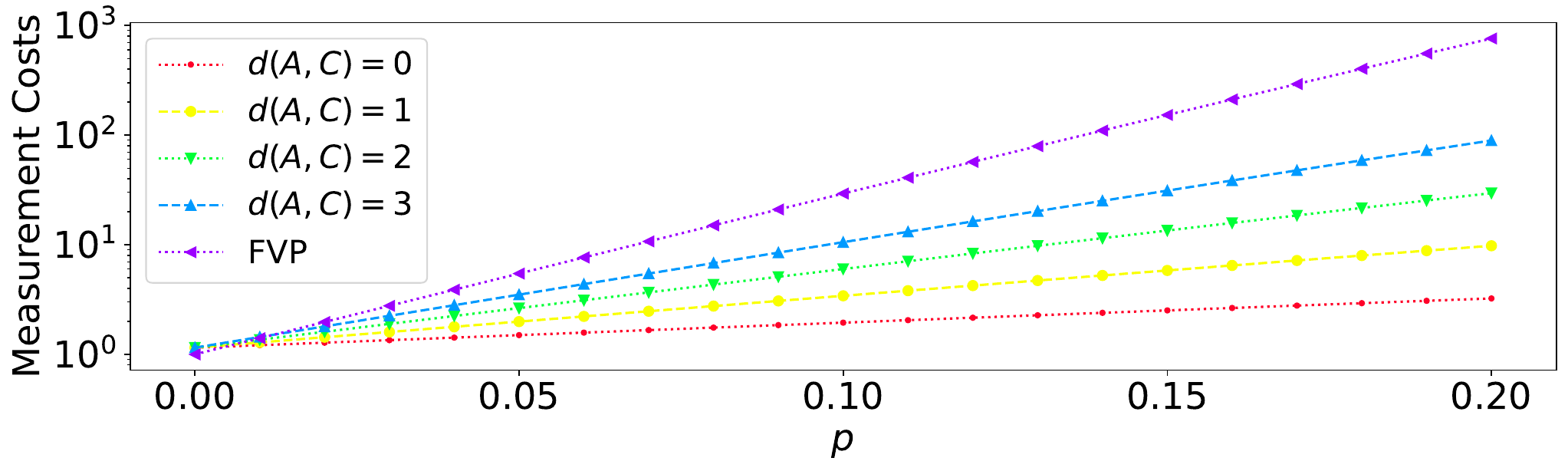}
  \caption{Measurement costs to perform error mitigation by FVP and LVP for a ground state of $H_{\rm TFI}$ in Eq.~(\ref{eq:transverseising}) with $\lambda=2$.
  Here we take $n=2$ copies of noisy ground state for $N=12$ qubits under local depolarizing noise of error rate $p$.}
  \label{Fig:expectationvalues}
\end{figure}
Figure \ref{Fig:expectationvalues} shows the measurement costs to perform the LVP for the noisy ground state of the Hamiltonian $H_{\rm TFI}$ in Eq.~(\ref{eq:transverseising}) at $\lambda=2$ (non-critical).
We can see that the growth of the measurement costs for LVP is significantly slower than that of FVP.
For example, the measurement costs of FVP for $p=0.15$ in Fig.~\ref{Fig:expectationvalues} are more than 100 while those of LVP are around 10, and thus we have a cost reduction by a factor of 10. 
Note that there is a tradeoff between the bias and measurement cost regarding $d(A, C)$~\cite{four_myfoot}.

Finally, we consider a unification of these two protocols: cooling and error mitigation. Namely, we consider a situation that any Gibbs state cannot be prepared perfectly. In order to describe the imperfection, we assume that the input state is a Gibbs state subject to the local depolarizing noise channel on all qubits. 
\begin{figure}[t!]
    \centering
  \includegraphics[width=0.45\textwidth]{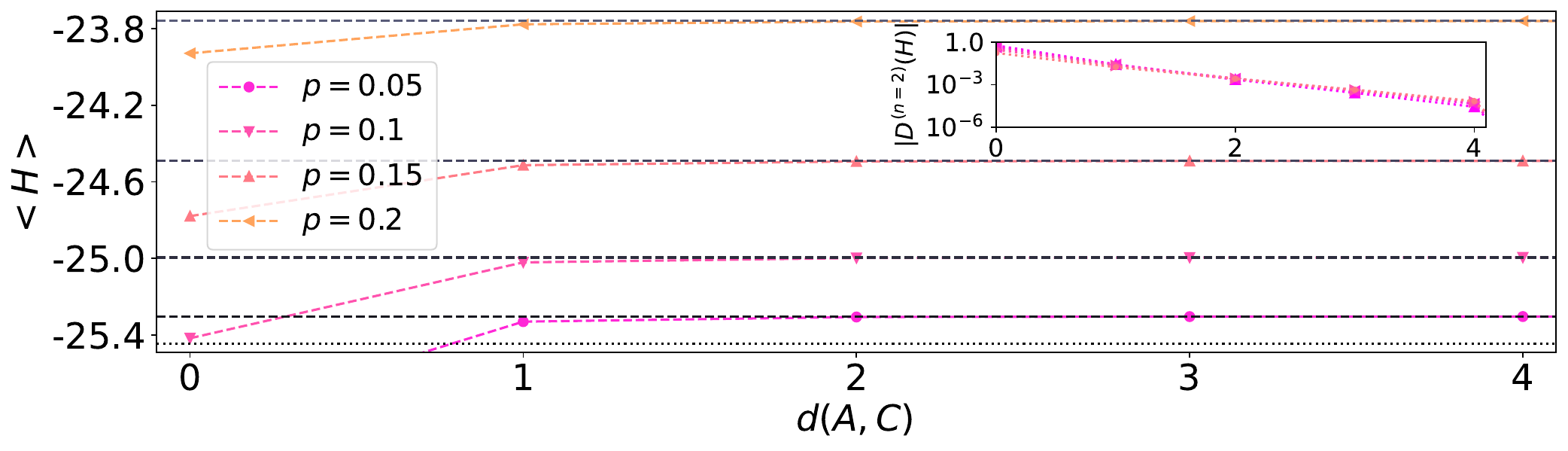}
  \caption{Unifying cooling and error mitigation to simulate  
  $\langle H_{\rm TFI} \rangle$ for $\lambda=2$. The LVP and FVP is performed for $n=2$ copies of noisy Gibbs state at $\beta=1$ for $N=12$ qubits.
  Horizontal dashed and dotted lines 
  indicate the results from the FVP with and without noise, respectively. (inset) Total deviation of the expectation values $D^{(n=2)}(H)$, where we define $|D^{(n)}(H)|$ by $|D^{(n)}(H)| = |\braket{H}^{(n)}_{\rm LVP}-\braket{H}^{(n)}_{\rm FVP}|$, between FVP and LVP against $d(A,C)$.
  }\label{Fig:expectationvalues_virtualcoolingwithnoise}
\end{figure}
Figure~\ref{Fig:expectationvalues_virtualcoolingwithnoise} shows the expectation values of $H_{\rm TFI}$ in Eq.~(\ref{eq:transverseising}) for Gibbs states with local depolarizing noise.
We see that the purification-based approaches are capable of performing both cooling and error mitigation simultaneously, and also that the deviation between LVP and FVP decreases
exponentially with $d(A, C)$ while the measurement costs are also suppressed exponentially.

\textit{Discussion and Conclusion.---} 
In this Letter, we have introduced the Localized Virtual Purification (LVP) that incorporates and utilizes the property of geometrical locality to enhance the utility of purification-based quantum simulation.
We have given theoretical guarantees that the LVP significantly alleviates the severe measurement overhead of the FVP when the two-point correlation function decays exponentially.
This includes the cooling of the Gibbs states of local Hamiltonians 
(Theorem \ref{Theorem1})
and error mitigation of ground state simulation in gapped systems (Theorem \ref{Theorem2}). We have also verified our findings via numerical simulation, and also shown that the LVP is capable of unifying both tasks.

We comment that an additional advantage of our LVP protocol is to alleviate the noise effect among the entangled measurement operations.
A previous study~\cite{vikstål2022study} has shown that the measurement costs of FVP increase exponentially with the number of qubits due to errors in the entangled measurement circuit. 
Using our LVP method, the measurement costs 
do not 
exponentially increase because the controlled derangement circuit is operated only in a local region. 
This is true not only for error mitigation but also for cooling achieved from FVP.

The LVP for the high-temperature regions of Gibbs states in Eq.~(\ref{eq:gibbs}) is one of the most effective cases of our protocol for practical quantum advantages.
For many practical cases, calculating thermodynamic properties for large system sizes of two- or higher-dimensional quantum systems is a challenging task on classical computers even with tensor network methods such as projected entangled-pair states~\cite{PhysRevResearch.1.033038}.
On the other hand, 
there exists an efficient quantum algorithm for preparing thermal states~\cite{brandao2019finite}, under the assumption of the exponential clustering of two-point correlation functions and approximating quantum Markov property, using quantum belief propagation~\cite{brandao2019finite,Kato_2019,PhysRevB.76.201102,PhysRevB.86.245116}. 
That situation completely matches that of our LVP protocol and therefore LVP works on a practical situation using quantum computers.

Last but not least, we comment on other promising applications of the LVP: 
the detection of topological orders~\cite{PhysRevLett.126.050501,doi:10.1126/sciadv.aaz3666}, the measurement of a characterized quantity of quantum chaos~\cite{PhysRevX.12.011018}, and an entanglement measure for mixed states~\cite{PhysRevLett.125.200501}. These quantities are well-known as fundamental concepts in condensed matter physics and quantum information and require to be measured efficiently on quantum computers. By applying our protocol, it may be possible to not only detect such physical quantities but also to incorporate error mitigation or cooling simultaneously. We believe that these are important issues for future research.

\textit{Acknowledgements.---}
We would like to thank Tomotaka Kuwahara, Wataru Mizukami, Yuichiro Yoshida, Yuta Shingu, and Tsuyoshi Okubo for fruitful discussions. 
This work was supported by MEXT Quantum Leap Flagship Program (MEXT Q-LEAP) Grant Number JPMXS0120319794.
This work was supported by JST Grant Number JPMJPF2221.
This
work was also supported by JST Moonshot R\&D (Grant
Number JPMJMS226C).
S.E. is supported by JST PRESTO No. JPMJPR2114.
Y.M. is supported by JSPS KAKENHI (Grant Number 23H04390).
N.Y. wishes to thank JST PRESTO No. JPMJPR2119, and the support from IBM Quantum.
We perform all the numerical calculations with QuTiP \cite{Johansson_2012}.


\let\oldaddcontentsline\addcontentsline
\renewcommand{\addcontentsline}[3]{}
\bibliographystyle{apsrev4-1}
\bibliography{ref}
\let\addcontentsline\oldaddcontentsline
\onecolumngrid

\clearpage
\begin{center}
	\Large
	\textbf{Supplementary Materials for: Localized Virtual purification}
\end{center}

\setcounter{section}{0}
\setcounter{equation}{0}
\setcounter{figure}{0}
\setcounter{table}{0}
\renewcommand{\thesection}{S\arabic{section}}
\renewcommand{\theequation}{S\arabic{equation}}
\renewcommand{\thefigure}{S\arabic{figure}}
\renewcommand{\thetable}{S\arabic{table}}

\addtocontents{toc}{\protect\setcounter{tocdepth}{0}}
{
\hypersetup{linkcolor=blue}
\tableofcontents
}

\section{Derivation of generalized correlation function in Eq.~(\ref{eq:noisytwopointcorrelationfunction})}
We derive that the deviation between the output from the Locally Virtual Purification (LVP) and the Fully Virtual Purification (FVP) can be written as a generalized correlation function as provided in Eq.~\eqref{eq:noisytwopointcorrelationfunction} in the main text. Recall that, given a full-rank density operator $\rho$ in the total system, the deviation of the expectation value of an observable $o_A$ supported on the region $A$ (see Fig.~\ref{Fig:fig1} for a graphical description of regions) is defined as
\begin{align}
    D^{(n)}(o_A)=\frac{\mathrm{Tr}_{A+B}[(\rho^{(A+B)})^n o_A]}{\mathrm{Tr}_{A+B}[(\rho^{(A+B)})^n]}-\frac{\mathrm{Tr}[\rho^n o_A]}{\mathrm{Tr}[\rho^n]}.
    \label{Seq:deviation}
\end{align}
For convenience of discussion, we first define two operators:
\begin{align}
    X_n &=\log{\left(\tr_C[\rho]^n\otimes (\sigma_C)^n \right)}=\log{[(\rho^{(A+B)}\otimes \sigma_C)^n]},\label{Seq:xn}\\
    Y_n &=\log{\left(\rho^n \right)}\label{Seq:yn},
\end{align}
where $\sigma_C$ is an arbitrary positive definite operator defined in region $C$.
By using Eqs.~(\ref{Seq:xn}) and (\ref{Seq:yn}), we can straightforwardly rewrite Eq.~(\ref{Seq:deviation}) as
\begin{align}
    &D^{(n)}(o_A)=\frac{\mathrm{Tr}[e^{X_n} o_A]}{\mathrm{Tr}[e^{X_n}]}-\frac{\mathrm{Tr}[e^{Y_n} o_A]}{\mathrm{Tr}[e^{Y_n}]}.
    \label{Seq:deviation2}
\end{align}
When $\rho$ is not a full-rank operator but still $\rho^{(A+B)}$ is full-rank,
a slightly modified formula (introduced in the following of this section) can be applicable.
As a representative case, we will consider that $\rho$  is a pure state.
If both $\rho^{(A+B)}$ and $\rho$ are pure, this is a trivial case: $D^{(n)}(o_A)=0$.

We also define an operator that interpolates $X_n$ and $Y_n$ using a parameter $\lambda$
($0\le\lambda\le 1$):
\begin{align}
    H_\lambda = Y_n +\lambda(X_n-Y_n).
\end{align}
This allows us to rewrite the deviation as
\begin{align}
    D^{(n)}(o_A)=\frac{\mathrm{Tr}[e^{X_n} o_A]}{\mathrm{Tr}[e^{X_n}]}-\frac{\mathrm{Tr}[e^{Y_n} o_A]}{\mathrm{Tr}[e^{Y_n}]}=\mathrm{Tr}\left(o_A\int_0^1 d\lambda \frac{d}{d\lambda}\frac{e^{ H_\lambda}}{\tr[e^{ H_\lambda}]} \right).
\end{align}
Note that the derivative in the integrand is written explicitly as
\begin{align}
    \frac{d}{d\lambda}\frac{e^{ H_\lambda}}{\tr[e^{ H_\lambda}]}=\frac{1}{\tr[e^{ H_\lambda}]} \frac{d}{d\lambda}e^{ H_\lambda}-\frac{e^{ H_\lambda}}{\tr[e^{ H_\lambda}]^2}\mathrm{Tr}\left(\frac{d}{d\lambda}e^{ H_\lambda} \right).
\end{align}
From the Duhamel's formula (see the derivation in Ref.~\cite{haber2018notes})
\begin{align}
    \frac{d}{d\lambda}e^{ H_\lambda}= \int_0^1 (e^{ H_\lambda})^\tau \left(\frac{d}{d\lambda}H_\lambda \right) (e^{ H_\lambda})^{1-\tau} d\tau, 
\end{align}
we finally obtain Eq.~\eqref{eq:noisytwopointcorrelationfunction} in the main text as
\begin{align}
    D^{(n)}(o_A)&= \mathrm{Tr}\left(o_A\int_0^1 d\lambda\int_0^1 d\tau \rho_\lambda^\tau (X_n-Y_n)\rho_\lambda^{1-\tau} - \rho_\lambda\mathrm{Tr}[\rho_\lambda^\tau (X_n-Y_n)\rho_\lambda^{1-\tau}] \right)\\
    &= \int_0^1 d\lambda\int_0^1 d\tau ~\mathrm{Corr}_{\rho_\lambda}^\tau (X_n-Y_n,o_A),
    \label{Seq:generalizedcorr1}
\end{align}
where we have introduced $\rho_\lambda=\frac{e^{ H_\lambda}}{\tr[e^{ H_\lambda}]} $ and a generalized correlation function as
\begin{align}
    \mathrm{Corr}_\rho^\tau (O,O') =\mathrm{Tr}[\rho^\tau O \rho^{1-\tau}O'] - \mathrm{Tr}[\rho O]\mathrm{Tr}[\rho O'].
\end{align}

If the state of the total system $\rho$ is pure, a slightly modified version of Eq.~(\ref{Seq:generalizedcorr1}) can be obtained as
\begin{align}
    D^{(n)}(o_A)&= \int_0^1 d\lambda\int_0^1 d\tau ~\mathrm{Corr}_{\rho^{(A+B)}_\lambda}^\tau (X^{(A+B)}_n-Y^{(A+B)}_n,o_A)
    \label{Seq:generalizedcorr2},
\end{align}
where we define
\begin{align}
    X^{(A+B)}_n &=\log{\left(\tr_C[\rho]^n \right)}=\log{[(\rho^{(A+B)})^n]},\label{Seq:xn2}\\
    Y^{(A+B)}_n &=\log{\tr_C\left[\rho^n \right]}\label{Seq:yn2},
\end{align}
and
\begin{align}
    H^{(A+B)}_\lambda &= Y^{(A+B)}_n +\lambda(X^{(A+B)}_n-Y^{(A+B)}_n),\\
    \rho^{(A+B)}_\lambda&=\frac{e^{ H^{(A+B)}_\lambda}}{\tr_{A+B}[e^{ H^{(A+B)}_\lambda}]}.
\end{align}

\subsection{Gibbs states $\rho_{\beta}(H)$: Derivation of $X_n$ and $Y_n$}
In this subsection, we derive an explicit form of $X_n$ and $Y_n$ assuming that the total quantum state $\rho$ is a canonical Gibbs state
\begin{align}
    \rho_{\beta}(H) = \frac{e^{-\beta H}}{\tr[e^{-\beta H}]}\notag,
\end{align}
for an inverse temperature $\beta$ under Hamiltonian $H$, as defined in Eq.~(\ref{eq:gibbs}) in the main text.
By definition, $Y_n$ can be directly calculated from Eq.~\eqref{Seq:yn} as
\begin{align}
Y_n =-n\beta H.
\end{align}
Here, we omit a constant term because the numerator and denominator in Eq.~(\ref{Seq:deviation2}) cancel each other out.
Furthermore, we choose $\sigma_C = e^{-\beta H^{(C)}}$, where $H^{(C)}$ 
are defined as the terms in original Hamiltonian $H$ that are solely supported on the region $C$.
Then, from Eq.~\eqref{Seq:xn} we obtain 
\begin{align}
X_n =-n\beta (H+\Delta H).
\end{align}
Here, the definition of $\Delta H$ is 
\begin{align}
    \Delta H = -H^{(B C)} +\Delta H^{(A+B)},
    \label{Seq:defdeltaH}
\end{align}
where $H^{(B C)}$ is the interaction Hamiltonian defined as the terms in original Hamiltonian $H$ that are  supported on both of the region $B$ and $C$, and the definition of $\Delta H^{(A+B)}$ is given by
\begin{align}
    \Delta H^{(A+B)}= -\frac{1}{\beta}\log\tr_C[\rho_{\beta}(H)] - H^{(A+B)},
    \label{Seq:defdeltaHab}
\end{align}
where $H^{(A+B)}$ is the Hamiltonian defined as the terms in original Hamiltonian $H$ that are  supported on either of the region $A$ and $B$.
$\Delta H^{(A+B)}$ has a meaning of the deviation of the Hamiltonian from a Gibbs state $\rho_{\beta}(H^{(A+B)})$ truncated on a local region $A+B$, by rewriting Eq.~(\ref{Seq:defdeltaHab}) as $e^{-\beta(H^{(A+B)}+\Delta H^{(A+B)}) } = \tr_C[e^{-\beta H}]$. Here, we neglect a constant term of this equality inside the Hamiltonian.
In general, $\Delta H^{(A+B)}$ may be a non-local operator inside the whole region of $A$ and $B$, but $\Delta H^{(A+B)}$ 
can be approximated by a sum of local operators supported on the boundary region of $B$ with exponentially small error.

\subsection{Pure states $\rho_0$: Derivation of $D^{(n)}_0(o_A)$}
In this subsection, we derive an explicit form of $D^{(n)}_0(o_A)$ for the case of a pure state $\rho_0$ from Eq.~(\ref{Seq:generalizedcorr2}).
Assuming that the total quantum state is pure $\rho_0$ so that $(\rho_0)^n =\rho_0$,
$X^{(A+B)}_n$ and $Y^{(A+B)}_n$ can be rewritten with a simpler form as
\begin{align}
    X^{(A+B)}_n &=n\log{[(\rho^{(A+B)})]}=nX^{(A+B)}_1,\\
    Y^{(A+B)}_n &=\log{\tr_C\left[(\rho_0)^n \right]}=Y^{(A+B)}_1=X^{(A+B)}_1,
\end{align}
and also we obtain
\begin{align}
    H^{(A+B)}_\lambda &=(1+(n-1)\lambda)X^{(A+B)}_1,\\
    \rho^{(A+B)}_\lambda&=\frac{(\rho^{(A+B)})^{(1+(n-1)\lambda)}}{\tr_{A+B}[(\rho^{(A+B)})^{(1+(n-1)\lambda)}]}.
\end{align}
By substituting these equations into  Eq.~(\ref{Seq:generalizedcorr2}), we obtain
\begin{align}
    D_0^{(n)}(o_A)&= \int_0^1 d\lambda \tr_{A+B}[\rho^{(A+B)}_\lambda (X^{(A+B)}_n-Y^{(A+B)}_n) o_A]-\tr_{A+B}[\rho^{(A+B)}_\lambda  o_A]\tr_{A+B}[\rho^{(A+B)}_\lambda (X^{(A+B)}_n-Y^{(A+B)}_n) ]\\
    &= (n-1)\int_0^1 d\lambda \frac{\tr_{A+B}[(\rho^{(A+B)})^{(1+(n-1)\lambda)}  X^{(A+B)}_1 o_A]-\tr_{A+B}[\rho^{(A+B)}_\lambda  o_A]\tr_{A+B}[(\rho^{(A+B)})^{(1+(n-1)\lambda)}  X^{(A+B)}_1 ]}{\tr_{A+B}[(\rho^{(A+B)})^{(1+(n-1)\lambda)}]}\label{Seq:generalizedcorrpurebefore}\\
    &= (n-1)\Big{(}
    \int_0^1 d\lambda \frac{\tr_{A+C}[\rho^{(A+C)}   o_A\otimes (\rho^{(C)})^{(n-1)\lambda}\log{[\rho^{(C)}]}]}{\tr_{C}[(\rho^{(C)})^{(1+(n-1)\lambda)}]}\notag\\
    &\qquad\qquad\qquad -\frac{\tr_{A+C}[\rho^{(A+C)}   o_A\otimes (\rho^{(C)})^{(n-1)\lambda}]\tr_{C}[(\rho^{(C)})^{(1+(n-1)\lambda)}\log{[\rho^{(C)}]} ]}{\tr_{C}[(\rho^{(C)})^{(1+(n-1)\lambda)}]^2} \Big{)}.
    \label{Seq:generalizedcorrpure}
\end{align}
Note that, from Eq.~(\ref{Seq:generalizedcorrpurebefore}) to (\ref{Seq:generalizedcorrpure}), we have employed the following relation that can be easily proven using the Schmidt decomposition for the pure state $\rho_0$ as
\begin{align}
    \tr_{A+B}[(\rho^{(A+B)})^{(1+(n-1)\lambda)}  \log{[\rho^{(A+B)}]} o_A]&=\tr_{A+B+C}[\rho_0 (\rho^{(A+B)})^{(n-1)\lambda}  \log{[\rho^{(A+B)}]} o_A]\\
    &=\tr_{A+B+C}[\rho_0 (\rho^{(C)})^{(n-1)\lambda}  \log{[\rho^{(C)}]} o_A]\\
    &=\tr_{A+C}[\rho^{(A+C)} o_A\otimes (\rho^{(C)})^{(n-1)\lambda}  \log{[\rho^{(C)}]} ].
\end{align}
After the integration of Eq.~(\ref{Seq:generalizedcorrpure}), we obtain the result
\begin{align}
    D^{(n)}_0(o_A)=\mathrm{Tr}_{\rm A+C}\left[(\rho^{(A+C)}-\rho^{(A)}\otimes \rho^{(C)})o_A \otimes\frac{(\rho^{(C)})^{n-1}}{\mathrm{Tr}[(\rho^{(C)})^n]}\right].
\end{align}

\section{Formalism of Fully and localized virtual purification}
In this section, we explain the formalism of FVP and LVP,
 quantum circuits for evaluating the expectation values of FVP and LVP, and also their measurement costs.
Let us first describe the FVP, which 
is a protocol to obtain expectation values of the purified quantum states $\rho_{\rm FVP}=\rho^n/ \tr{[\rho^n]}$ by three steps: preparing $n$ copies of $\rho$, coherent operations among copies, and then the post-processing procedure
%
\cite{PhysRevX.11.031057,PhysRevX.11.041036,https://doi.org/10.48550/arxiv.2210.10799,PhysRevLett.129.020502}.
A quantum circuit as shown in Fig.~\ref{circuit:virtualdistillation}
can evaluate the expectation values of a Pauli operator $P_a$ within the formalism of the FVP:
\begin{align}
    \braket{P_a}_{\rm FVP}=\tr[P_a\rho_{\rm FVP}]&=\frac{\tr[P_a \rho^n]}{\tr[ \rho^n]}\notag\\&=\frac{\mathrm{Tr}[ (\rho^{\otimes n}) \mathbb{S}(P_a\otimes \mathbb{I}^{\otimes n-1})]}{\mathrm{Tr}[(\rho^{\otimes n})\mathbb{S}]}.
    \label{eq:virtualdef}
\end{align}
Here, $\mathbb{S}$ is the cyclic-shift operator (or called the derangement operator, which can be decomposed of a product of the SWAP operators) between $n$ copies, defined by $\mathbb{S}\ket{\psi_1\psi_2\cdots \psi_n}=\ket{\psi_n\psi_1\psi_2\cdots \psi_{n-1}} $, and $\mathbb{I}$ is the identity operator.
Note that we can also measure these expectation values without any ancillary qubits by using the Bell-like collective measurements for a multi-copy case~\cite{PhysRevX.11.041036}.
\begin{figure*}[htbp]
\begin{minipage}[b]{0.41\linewidth}
\begin{align*}
\Qcircuit @C=1.2em @R=1.2em {
&\lstick{\ket{0}}&\gate{H}&\ctrl{12}&\ctrl{11}&\ctrl{10}&\ctrl{9}&\ctrl{8}&\ctrl{7}&\ctrl{1} &\measureD{X}\\
& & \qw& \qw& \qw & \qw& \qw& \qw& \qswap& \gate{P_a}& \qw \\
& & \qw& \qw& \qw & \qw& \qw& \qswap& \qw& \qw& \qw \\
& & \qw& \qw& \qw & \qw& \qswap& \qw& \qw& \qw& \qw \\
& & \qw& \qw& \qw & \qswap& \qw& \qw& \qw& \qw& \qw \\
& & \qw& \qw& \qswap & \qw& \qw& \qw& \qw& \qw& \qw \\
& \lstick{\raisebox{6.6em}{$\rho$\ }}& \qw& \qswap\gategroup{2}{2}{7}{5}{.3em}{\{}& \qw&\qw& \qw & \qw & \qw& \qw& \qw \\
& & \qw& \qw& \qw & \qw& \qw& \qw& \qswap& \qw& \qw \\
& & \qw& \qw& \qw & \qw& \qw& \qswap& \qw& \qw& \qw \\
& & \qw& \qw& \qw & \qw& \qswap& \qw& \qw& \qw& \qw \\
& & \qw& \qw& \qw & \qswap& \qw& \qw& \qw& \qw& \qw \\
& & \qw& \qw& \qswap & \qw& \qw& \qw& \qw& \qw& \qw \\
& \lstick{\raisebox{5.6em}{$\rho$\ }}& \qw& \qswap\gategroup{8}{2}{13}{5}{.5em}{\{}& \qw&\qw& \qw & \qw & \qw&\qw&\qw \\
}
\end{align*}
\caption{A quantum circuit for evaluating the expectation value of $P_a$ based on the FVP method as described in Eq.~(\ref{eq:virtualdef}). The numerator can be estimated as $ \tr [\rho^n P_a]$, where we set $n=2$ copies and $N=6$ qubits. 
The denominator $\tr[\rho^n]$ can be evaluated if we eliminate the controlled-$P_a$ gate.
}
\label{circuit:virtualdistillation}
\end{minipage}
\begin{minipage}[b]{0.41\linewidth}
\begin{align*}
\Qcircuit @C=1.2em @R=1.2em {
&\lstick{\ket{0}}&\gate{H}&\qw&\ctrl{9}&\ctrl{8}&\ctrl{7}&\ctrl{2} &\measureD{X}\\
& & \qw & \qw& \qw& \qw& \qswap& \qw& \qw \\
& & \qw & \qw& \qw& \qswap& \qw& \gate{P_a}& \qw \\
& & \qw & \qw& \qswap& \qw& \qw& \qw& \qw \\
& & \qw & \qw& \qw& \qw& \qw& \qw& \qw \\
& & \qw & \qw& \qw& \qw& \qw& \qw& \qw \\
& \lstick{\raisebox{6.6em}{$\rho$\ }}& \qw& \qw\gategroup{2}{2}{7}{5}{.3em}{\{}& \qw&\qw& \qw & \qw & \qw \\
& & \qw & \qw& \qw& \qw& \qswap& \qw& \qw \\
& & \qw & \qw& \qw& \qswap& \qw& \qw& \qw \\
& & \qw & \qw& \qswap& \qw& \qw& \qw& \qw \\
& & \qw & \qw& \qw& \qw& \qw& \qw& \qw \\
& & \qw & \qw& \qw& \qw& \qw& \qw& \qw \\
& \lstick{\raisebox{5.6em}{$\rho$\ }}& \qw& \qw\gategroup{8}{2}{13}{5}{.5em}{\{}& \qw&\qw& \qw & \qw & \qw \\
}
\end{align*}
\caption{A quantum circuit for evaluating the expectation value of $P_a$ based on the LVP method as described in Eq.~(\ref{eq:partialdef}). The numerator can be estimated as $\tr [\rho^n P_a]$, where we set $n=2$ copies and $N=6$ qubits ($N_A=1$, $N_B=2$, $N_C=3$, and $d(A,C)=1$). We perform the controlled cyclic-shift operators locally around the support of $P_a$, in contrast to the FVP circuit in Fig.~\ref{circuit:virtualdistillation}.
}
\label{circuit:partialdistillation}
\end{minipage}
\end{figure*}
While the FVP requires the globally entangled measurement that involves the entire system among the $n$ copies, the LVP only requires entangled measurement to involve the local subset that covers the support of the observable.
Figure~\ref{circuit:partialdistillation} shows a quantum circuit for evaluating the expectation values of LVP:
\begin{align}
    \frac{\tr_{A+B}[P_a (\rho^{(A+B)})^n]}{\tr_{A+B}[ (\rho^{(A+B)})^n]}&=\frac{\mathrm{Tr}[ (\rho^{\otimes n}) \mathbb{S}^{(A+B)}(P_a\otimes \mathbb{I}^{\otimes n-1})]}{\mathrm{Tr}[(\rho^{\otimes n})\mathbb{S}^{(A+B)}]}.
    \label{eq:partialdef}
\end{align}
Here, $\mathbb{S}^{(A+B)}$ is a derangement operator that acts on the regions $A$ and $B$.
After summing up contributions from different $P_a$'s,
we can obtain the expectation values of LVP in Eq.~(\ref{eq:partialvirtual}). 

The above discussion assumes that we prepare $n$ copies of a quantum state $\rho$ and also operate controlled derangement operations.
Meanwhile, it is in principle possible to perform the FVP and LVP only with a single-copy measurement without entangled measurement operation, using techniques such as the classical shadow tomography~\cite{Huang_2020,PRXQuantum.4.010303}. Note that this is in compensation for the exponentially large number of measurements~\cite{doi:10.1126/science.abn7293}.
On the other hand,
the number of measurements required for our LVP method depends on $N_A+N_B$, but not on $N$ (the measurement costs for FVP and LVP using the controlled derangement circuits are shown in the following subsection).
We also mention that there exists a hybrid approach of multiple-copy and single-copy measurement, which contains a trade-off relation between the number of copies and the measurement overhead~\cite{zhou2022hybrid}.

\subsection{Variance for estimating the expectation values of FVP and LVP}\label{App:costandvariance}
In this subsection, we compute the variance for estimating the expectation values of FVP and LVP by using the quantum circuits in Figs.~\ref{circuit:virtualdistillation} and \ref{circuit:partialdistillation}.
For simplicity, we assume that the denominators and the numerators for FVP and LVP are independently measured.
If we estimate $E_1$ ($E_2$) with a statistical error $\delta E_1$ ($\delta E_2$), then an estimated value of $E_1/E_2$ is given by
\begin{align}
    \frac{E_1+\delta E_1}{E_2+\delta E_2}= \frac{E_1}{E_2}+\frac{\delta E_1}{E_2}-\frac{E_1\delta E_2}{E_2^2} +O(\delta^2),
\end{align}
and we obtain its variance 
\begin{align}
    {\rm Var}\left(\frac{E_1}{E_2}\right)&\simeq \frac{{\rm Var}(E_1)}{(E_2)^2}+\frac{(E_1)^2{\rm Var}(E_2)}{(E_2)^4}-2E_1\frac{{\rm Cov}(E_1,E_2)}{(E_2)^3}\label{appeq:varianceformulabefore}\\
    &=\frac{{\rm Var}(E_1)}{(E_2)^2}+\frac{(E_1)^2{\rm Var}(E_2)}{(E_2)^4}.
    \label{appeq:varianceformula}
\end{align}
Here, ${\rm Var}(\bullet)$ is a variance and ${\rm Cov}(E_1,E_2)$ is a covariance of $E_1$ and $E_2$.
From Eq.~(\ref{appeq:varianceformulabefore}) to (\ref{appeq:varianceformula}), we used ${\rm Cov}(E_1,E_2)=0$, which follows from the assumption that $E_1$ and $E_2$ are estimated independently.

\subsubsection{Fully Virtual Purification}
In the case of the FVP as shown in Fig.~\ref{circuit:virtualdistillation}, we substitute $E_1=\tr[P_a \rho^n]$, $E_2=\tr[\rho^n]$, ${\rm Var}(E_1)=1-\tr[P_a \rho^n]^2$, and ${\rm Var}(E_2)=1-\tr[\rho^n]^2$ into Eq.~(\ref{appeq:varianceformula}) to obtain
\begin{align}
    {\rm Var}(P_a)_{\rm FVP} = \frac{1-\tr[P_a \rho^n]^2}{\tr[ \rho^n]^2}+\frac{\tr[P_a \rho^n]^2(1-\tr[\rho^n]^2)}{\tr[ \rho^n]^4}.
    \label{appeq:variance_FVP}
\end{align}
Moreover, the variance can be bounded as a simpler form using the relation $\tr[P_a \rho^n]\le \tr[\rho^n]\le 1$ and $\tr[P_a\rho^n]\le 1$:
\begin{align}
    {\rm Var}(P_a)_{\rm FVP} \le \frac{1}{\tr[ \rho^n]^2}+\frac{\tr[ \rho^n]^2}{\tr[ \rho^n]^4}=\frac{2}{\tr[ \rho^n]^2}.
\end{align}
This implies that  $\tr[ \rho^n]^{-2}$ can be regarded as the measurement cost to perform the FVP.

In particular, if $\rho$ is a Gibbs state, then we obtain
\begin{align}
    \tr[ (\rho_\beta(H))^n]^{-2}=\exp{[2n\beta N(f_{n\beta}-f_{\beta})]},\label{eq:fvp_cooling_cost}
\end{align}
where $f_{\beta}:=-\frac{1}{N\beta}\log\tr[e^{-\beta H}]$ is a free energy density of the Hamiltonian $H$ at the inverse temperature $\beta$.
Note that the difference of the free energy density is positive ($f_{n\beta}-f_{\beta}>0$) because of the fact that $\frac{\partial f_{\beta}}{\partial \beta} = \frac{1}{\beta^2} s_\beta >0$, where $s_\beta>0$ is the thermodynamic entropy density at the inverse temperature $\beta$. Therefore, we can see that measurement costs exponentially increase as the system size $N$ increases.

\subsubsection{Localized Virtual Purification}
In the case of the LVP as shown in Fig.~\ref{circuit:partialdistillation}, we substitute 
$E_1=\tr_{A+B}\left[P_a \left(\rho^{(A+B)}\right)^n\right]$, $E_2=\tr_{A+B}\left[\left(\rho^{(A+B)}\right)^n\right]$, ${\rm Var}(E_1)=1-\tr_{A+B}\left[P_a \left(\rho^{(A+B)}\right)^n\right]^2$, and ${\rm Var}(E_2)=1-\tr_{A+B}\left[ \left(\rho^{(A+B)}\right)^n\right]^2$
into Eq.~(\ref{appeq:varianceformula}) to obtain
\begin{align}
    {\rm Var}(P_a)_{\rm LVP} = \frac{1-\tr_{A+B}\left[P_a \left(\rho^{(A+B)}\right)^n\right]^2}{\tr_{A+B}\left[ \left(\rho^{(A+B)}\right)^n\right]^2}+\frac{\tr_{A+B}\left[P_a \left(\rho^{(A+B)}\right)^n\right]^2\left(1-\tr_{A+B}\left[\left(\rho^{(A+B)}\right)^n\right]^2\right)}{\tr_{A+B}\left[\left(\rho^{(A+B)}\right)^n\right]^4}.
    \label{appeq:variance_LVP}
\end{align}
Similarly, we obtain
\begin{align}
    {\rm Var}(P_a)_{\rm LVP} \le \frac{2}{\tr_{A+B}\left[\left(\rho^{(A+B)}\right)^n\right]^2},
\end{align}
where $\tr_{A+B}\left[\left(\rho^{(A+B)}\right)^n\right]^{-2}$ can be regarded as the measurement cost. 

In particular, if $\rho$ is a Gibbs state, $\rho^{(A+B)}$ can be approximated by a Gibbs state defined by the truncated Hamiltonian $H^{(A+B)}$ supported solely on the regions $A$ and $B$.
Then we obtain
\begin{align}
    \tr_{A+B}\left[\left(\rho^{(A+B)}\right)^n\right]^{-2}\simeq \tr_{A+B}\left[\left(\rho_\beta(H^{(A+B)})\right)^n\right]^{-2}=\exp{[2n\beta(N_A+N_B)(f_{n\beta}-f_{\beta})]},
\end{align}
where $f_{\beta}$ is a free energy density of the Hamiltonian $H^{(A+B)}$.
By comparing this with Eq.~\eqref{eq:fvp_cooling_cost}, we can see that the measurement costs are exponentially small regarding $N_C$ compared with the FVP:
\begin{align}
    \frac{\tr_{A+B}\left[\left(\rho^{(A+B)}\right)^n\right]^{-2}}{\tr[ \rho_\beta(H)^n]^{-2}}\simeq \exp{[-2n\beta N_C(f_{n\beta}-f_{\beta})]}.
    \label{appeq:cost_derivation}
\end{align}
This approximation relies on a natural assumption that the subsystem states are in the same thermal equilibrium states of the total system, namely providing the same thermodynamic functions, such as the free energy, and the expectation values of local observables. 
This assumption is one of the fundamental thermodynamic principles and several studies derive the assumption from purely quantum mechanics~\cite{PhysRevX.4.031019,Fr_hlich_2015,Popescu_2006}.
This approximation finds support in Fig.~2 in the main text, where the exponentially decaying MSE serves as numerical evidence. 
Additionally, we present numerical simulations involving the one-dimensional transverse field Ising model to offer direct validation.
\begin{figure}[htbp]
    \centering
  \includegraphics[width=0.95\textwidth]{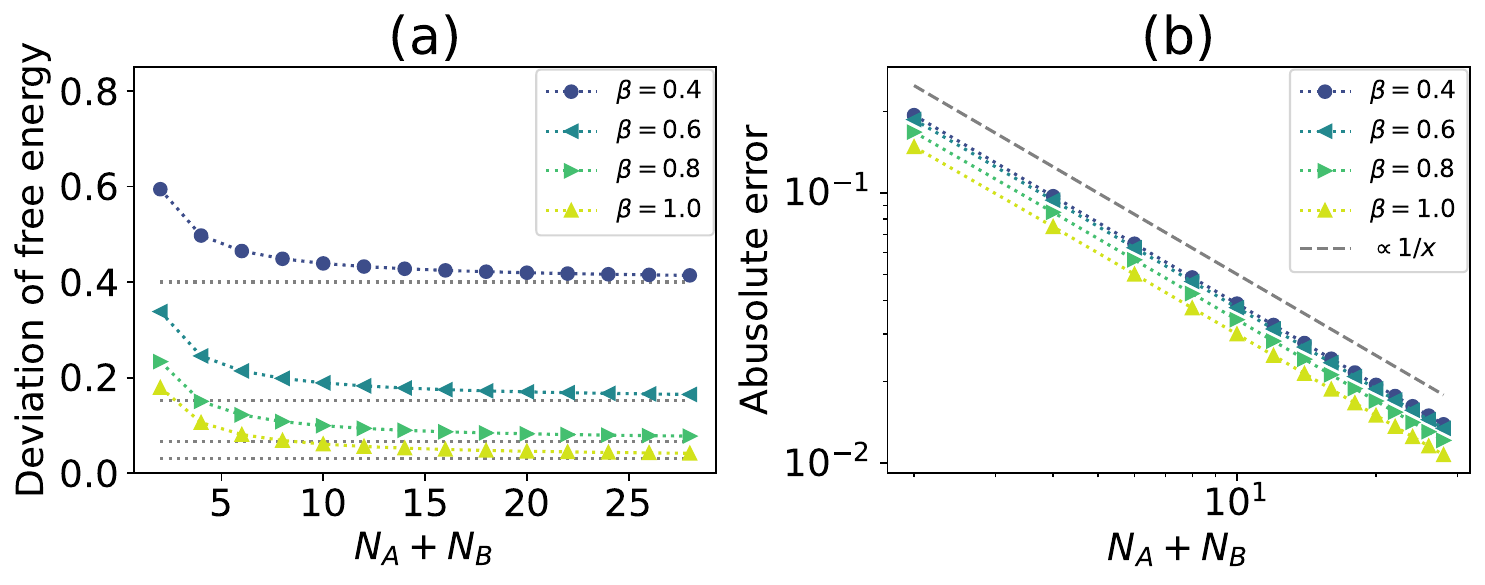}
  \caption{(a) Plots of $-\frac{1}{\beta(N_A+N_B)}\log{\tr_{A+B}\left[\left(\rho^{(A+B)}\right)^2\right]}$ against $N_A+N_B$ for various temperatures. We consider the one-dimensional transverse field Ising model with $\lambda=2$ for $n=2$ copies and take the limit of $N_C \rightarrow \infty$, to omit the finite-size effect of the total system. Horizontal dotted lines denote the values of $(f_{\beta}-f_{2\beta})$ in the thermodynamic limit. (b) Plots of absolute errors $-\frac{1}{\beta(N_A+N_B)}\log{\tr_{A+B}\left[\left(\rho^{(A+B)}\right)^2\right]}-(f_{\beta}-f_{2\beta})$ against $N_A+N_B$.}
  \label{Fig:local_purity}
\end{figure}
Figure \ref{Fig:local_purity} (a) shows $-\frac{1}{\beta(N_A+N_B)}\log{\tr_{A+B}\left[\left(\rho^{(A+B)}\right)^2\right]}$ against $N_A+N_B$, indicating that the deviation decays as $N_A+N_B$ increases.
Figure \ref{Fig:local_purity} (b) additionally plots the absolute errors $-\frac{1}{\beta(N_A+N_B)}\log{\tr_{A+B}\left[\left(\rho^{(A+B)}\right)^2\right]}-(f_{\beta}-f_{2\beta})$ against $N_A+N_B$. We can see that the absolute errors behave in proportion to $1/(N_A+N_B)$, resulting in
\begin{align}
    \tr_{A+B}\left[\left(\rho^{(A+B)}\right)^2\right]^{-2}=\exp{[4\beta(N_A+N_B)(f_{2\beta}-f_{\beta})+O(1)]}.
\end{align}
In Fig.~\ref{Fig:local_purity}, we employ the analytical results of the free energy density in the thermodynamic limit as
\begin{align}
    f_\beta=-\frac{1}{2\pi\beta} \int_{-\pi}^\pi dq \log{(e^{\beta \varepsilon_q}+e^{-\beta \varepsilon_q})}
    ,
\end{align}
where we omit the constant values independent of $\beta$ and $\varepsilon_q=\sqrt{(\cos{q}+\lambda)^2+(\sin{q})^2}$ defined in Eq.~(\ref{appeq:singleenergyTFIM}).



\subsection{Examples of tasks: cooling and error mitigation}
We review FVP as a task for cooling and error mitigation, which is referred to as virtual cooling~\cite{PhysRevX.9.031013} and virtual distillation~\cite{PhysRevX.11.031057,PhysRevX.11.041036,https://doi.org/10.48550/arxiv.2210.10799,PhysRevLett.129.020502}.
For cooling, we prepare a canonical Gibbs state $\rho_\beta(H)= e^{-\beta H} / {\rm Tr}[e^{-\beta H}]$ for inverse temperature $\beta$ under Hamiltonian $H$, 
and the density matrix of FVP is given by
\begin{align}
    \rho_{\rm FVP}&=\frac{(\rho_\beta (H))^n}{ \tr{[(\rho_\beta (H))^n]}}
    =\rho_{n\beta}(H).
\end{align}
This means that it is possible to obtain results at inverse temperatures $n\beta$ from a canonical Gibbs state at an inverse temperature $\beta$. 
It is noteworthy that FVP for cooling was experimentally realized using ultracold atoms in optical lattices in Ref.~\cite{PhysRevX.9.031013}.

For error mitigation, we explain the mechanism of mitigating errors as follows:
We consider a noisy state $\rho$ such that its decomposition is given as
\begin{align}
    \rho = (1-\lambda_0) \ket{\psi}\bra{\psi}+ \lambda_0\sum_kp_k \ket{\psi_k}\bra{\psi_k}.
\end{align}
Here, $\ket{\psi}$ is a dominant eigenvector that is known to be practically close to the noiseless state~\cite{Koczor_2021}
and $\ket{\psi}$ and $\ket{\psi_k}$ for all $k$ are orthogonal. $\lambda_0$ and $p_k$ denote the effects of  error and satisfy $ \sum_k p_k = 1$ ($p_k\ge 0$ for all $k$).
Then, the error-mitigated state $\rho_{\rm FVP}$ is exponentially close to the 
dominant eigenvector
in terms of $n$:
\begin{align}
    \tr[\rho_{\rm FVP}O]= \bra{\psi}O\ket{\psi}+O\left(\left(\frac{\lambda_0}{1-\lambda_0}\right)^n\right).
\end{align}
Similarly, FVP for error mitigation was experimentally realized using the superconducting qubits~\cite{https://doi.org/10.48550/arxiv.2210.10799}.

As we have shown in the main text, the LVP can also be applicable for cooling and error mitigation, as long as the deviation $D^{(n)}(o_A)$ is sufficiently small, and therefore its condition is essential for the practical use of LVP.

\section{Formal statement of Theorem \ref{Theorem1} and its proof: exponentially small deviation between FVP and LVP for Gibbs states}\label{sec_devVC}
In this section, we introduce the formal version of Theorem \ref{Theorem1} and present its proof.
First, we introduce an assumption regarding the effective Hamiltonian $\Delta H^{(A+B)}$  in Eq.~(\ref{Seq:defdeltaHab}):
\begin{assumption}\label{assumption1}
For sufficiently small $\beta$, the effective Hamiltonian $\Delta H^{(A+B)}$ exist, satisfying the following two properties:
\begin{enumerate}
\item[(i)] $\Delta H^{(A+B)}$ can be approximated to $\Delta H^{(A+B)}_{\rm local}$ with its support included in the boundary region of $A\cup B$, $|\partial (A\cup B)_l|=|\{v\in (A\cup B)|\min_{j\in C}d_{v,j}\le l\}|$ with exponentially small errors:
\begin{align}
    \|\Delta H^{(A+B)}-\Delta H^{(A+B)}_{\rm local}\|\le c_1|\partial (A\cup B)_l| \exp{[-C_\beta l/r]}\quad (=O(\exp{[-l]})) 
\end{align}
where 
$c_1$ and $C_\beta$ are positive constants and
Hamiltonian has $k$-body interaction with $r$ the length of the interaction.
\item[(ii)] $\Delta H^{(A+B)}_{\rm local}$ can be decomposed into a sum of local operators and the support of $\Delta H^{(A+B)}_{\rm local}$ for each term is equal to or less than $l$.
\end{enumerate}
\end{assumption}
A previous study~\cite{PhysRevLett.124.220601} has proved Assumption \ref{assumption1} for general cases of non-commuting Hamiltonians for the high temperature region above the threshold temperature $\beta_c=\frac{1}{8e^3 kg}$. However, it has turned out that the proof contains a flaw and requires a significant modification~\cite{kuwahara2024qip}. Although Assumption \ref{assumption1} (both (i) and (ii)) holds in specific systems such as commuting Hamiltonians for high temperature region~\cite{bluhm2024strong} and numerical simulations stand for the validity even for a non-commuting  Hamiltonian~\cite{PhysRevB.81.054106}, the general non-commuting cases are still an open problem~\cite{five_myfoot}.

Under Assumption \ref{assumption1}, we can prove the following theorem:
\begin{theorem}\label{Theorem3}(formal version of Theorem \ref{Theorem1})
Assuming the exponential clustering of the two-point correlation function of canonical Gibbs states at an inverse temperature $n\beta$, $|\braket{O_AO_C}_{n\beta}-\braket{O_A}_{n\beta}\braket{O_C}_{n\beta}|\le c'\|O_A\|\|O_C\|e^{-d(A,C)/\xi'}$~\cite{araki1969gibbs,1995JSP....80..223P,ueltschi2005cluster,PhysRevX.4.031019,Fr_hlich_2015}, where $c'$ and $\xi'$ are positive parameters independent of $N$,
the difference $|D^{(n)}(o_A)|$ is exponentially small in terms of $d(A,C)$:
\begin{align}
    |D^{(n)}(o_A)|&\le  c'n\beta\|o_A\|\|\Delta H_{\rm local}\|e^{-d(A,C)/\xi'}+C_\beta' \min{(|\partial A|,|\partial C|)}\|o_A\|\|\Delta H_{\rm local}\|e^{-d(A,C)/\xi_\beta'}\notag\\
    &\qquad +2n\beta c_1\|o_A\|  |\partial (A\cup B)_r|\exp{[-C_\beta d(A,C)/(2r)]}=O(e^{-d(A,C)}).
\end{align}
Here, as previously defined in Eqs.~(\ref{Seq:defdeltaH}) and (\ref{Seq:defdeltaHab}), we define $\Delta H$ by 
    $\Delta H = -H^{(BC)} +\Delta H^{(A+B)}$,
where $H^{(BC)}$ is the interaction Hamiltonian defined as the terms in original Hamiltonian $H$ that are  supported on both of the region $B$ and $C$,
and $\Delta H^{(A+B)}$ is defined by 
    $\Delta H^{(A+B)}= -\frac{1}{\beta}\log\tr_C[\rho_{\beta}(H)] - H^{(A+B)}$,
where $H^{(A+B)}$ is the Hamiltonian defined as the terms in original Hamiltonian $H$ that are  supported on either of the region $A$ and $B$.
\end{theorem}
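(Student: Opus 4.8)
The plan is to specialize the integral representation of the deviation to Gibbs states and then to localize and cluster. Substituting $X_n=-n\beta(H+\Delta H)$ and $Y_n=-n\beta H$ into Eq.~\eqref{eq:noisytwopointcorrelationfunction} gives $D^{(n)}(o_A)=-n\beta\int_0^1 d\lambda\int_0^1 d\tau\,\mathrm{Corr}_{\rho_\lambda}^\tau(\Delta H,o_A)$ with $\rho_\lambda=\rho_{n\beta}(H+\lambda\Delta H)$, so it suffices to bound $|\mathrm{Corr}_{\rho_\lambda}^\tau(\Delta H,o_A)|$ uniformly in $\lambda,\tau\in[0,1]$ and multiply by $n\beta$. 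The first step is to replace $\Delta H$ by its boundary-localized approximant $\Delta H_{\rm local}=-H^{(BC)}+\Delta H^{(A+B)}_{\rm local}$ from Assumption~\ref{assumption1}: by linearity of $\mathrm{Corr}_\rho^\tau$ in its first argument and the Hölder-type estimate $|\mathrm{Corr}_\rho^\tau(O,O')|\le 2\|O\|\,\|O'\|$ (using $\|\rho^\tau\|_{1/\tau}=\|\rho^{1-\tau}\|_{1/(1-\tau)}=1$), the replacement error is at most $2\|o_A\|\,\|\Delta H-\Delta H_{\rm local}\|$, which by Assumption~\ref{assumption1}(i) with a truncation length $l\sim d(A,C)/2$ produces the third term of the claimed bound and simultaneously guarantees that $\mathrm{supp}(\Delta H_{\rm local})$ sits a distance $\gtrsim d(A,C)/2$ away from $A$.

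The second step treats the main piece $\mathrm{Corr}_{\rho_\lambda}^\tau(\Delta H_{\rm local},o_A)$ by peeling off the $\lambda$-dependence around $\lambda=0$: write it as $\mathrm{Corr}_{\rho_0}^\tau(\Delta H_{\rm local},o_A)+\int_0^\lambda d\mu\,\frac{d}{d\mu}\mathrm{Corr}_{\rho_\mu}^\tau(\Delta H_{\rm local},o_A)$, where $\rho_0=\rho_{n\beta}(H)$ is exactly the FVP state. The first term is a genuine two-point canonical correlation of the physical Gibbs state between operators whose supports are a distance $\gtrsim d(A,C)$ apart, so the exponential clustering hypothesis at inverse temperature $n\beta$ bounds it by $c'\|o_A\|\,\|\Delta H_{\rm local}\|\,e^{-d(A,C)/\xi'}$, the first term of the theorem. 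For the $\mu$-derivative, one more application of Duhamel's formula to $\rho_\mu=e^{-n\beta(H+\mu\Delta H_{\rm local})}/\mathrm{Tr}[\cdots]$ brings down a factor proportional to $n\beta\Delta H_{\rm local}$ and turns the derivative into a generalized three-operator correlator with two insertions near the $B$--$C$ boundary and one insertion of $o_A$ on $A$; expanding $\Delta H_{\rm local}$ into its $O(\min(|\partial A|,|\partial C|))$ local summands and applying the quasi-locality/clustering bounds for $\rho_\mu$ that follow from the finite-temperature Lieb-Robinson series gives a sum of contributions each of order $e^{-d(A,C)/\xi_\beta'}$, i.e. the second term (the $\beta$-subscripts on $\xi_\beta'$ and $C_\beta'$ recording that this estimate goes through the Lieb-Robinson machinery rather than the bare clustering hypothesis). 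Collecting the three pieces and performing the trivial $\lambda,\tau$ integrations over unit intervals yields the stated inequality, and every prefactor ($|\partial(A\cup B)_\bullet|$, $\min(|\partial A|,|\partial C|)$) grows only polynomially in $d(A,C)$ and subexponentially in $N$, so the bound is $O(e^{-d(A,C)})$.

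The main obstacle I expect is the $\mu$-derivative step. Because $\mathrm{Corr}_\rho^\tau$ is an imaginary-time (KMS/canonical) correlator rather than a static one, even invoking clustering for the $\lambda=0$ term needs the exponential-decay hypothesis in its canonical form and uniformly in $\tau$, and differentiating in $\mu$ produces a three-operator KMS correlator whose decay must be extracted carefully: one has to show that two coincident insertions near the $B$--$C$ boundary together with a far insertion on $A$ still decay like $e^{-d(A,C)/\xi}$, uniformly over the deformation family $\{\rho_\mu\}_{\mu\in[0,1]}$ and over $\tau$, which is precisely where the Lieb-Robinson bound and the boundary factor $\min(|\partial A|,|\partial C|)$ enter. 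A conceptually separate caveat is that all of this is conditional on Assumption~\ref{assumption1}, whose general non-commuting case is open; the proof should isolate the single place where locality of $\Delta H^{(A+B)}$ is used (the replacement step above) so that the remaining clustering and Lieb-Robinson estimates stand unconditionally.
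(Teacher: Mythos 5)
Your skeleton matches the paper's: write $D^{(n)}(o_A)$ as $-n\beta\int_0^1 d\lambda\int_0^1 d\tau\,\mathrm{Corr}_{\rho_\lambda}^\tau(\Delta H,o_A)$, localize $\Delta H$ at the cost of the third term, and then invoke clustering. Your localization step is essentially sound (the paper does it via Lemma~\ref{lemma1}, a trace-norm perturbation bound on Gibbs states, which has the advantage of swapping the \emph{state} $\rho_\lambda\to\rho_\lambda^{\rm(local)}$ at the same time; your H\"older bound on the correlator only swaps the operator in the first slot, leaving $\rho_\lambda$ still defined through the non-local $\Delta H$ --- a secondary slip you would need to patch).

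The genuine gap is in your second step, and you half-diagnose it yourself. The clustering hypothesis in the theorem is stated only for the \emph{static} correlator $\braket{O_AO_C}-\braket{O_A}\braket{O_C}$, whereas what appears in the integrand is the imaginary-time-displaced correlator $\tr[\rho^\tau O\,\rho^{1-\tau}O']-\tr[\rho O]\tr[\rho O']$. The missing ingredient is precisely Proposition~\ref{proposision2} (Theorem~13 of Ref.~\cite{PhysRevX.12.021022}): for a Gibbs state of any short-range Hamiltonian at any temperature, $|\tr[\rho^\tau O_A\rho^{1-\tau}O_C]-\tr[\rho\,O_AO_C]|\le C_\beta'\min(|\partial A|,|\partial C|)\|O_A\|\|O_C\|e^{-d(A,C)/\xi_\beta'}$, uniformly in $\tau$. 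That inequality is what produces the second term of the claimed bound; you instead attribute that term to a $\mu$-derivative of the correlator along the deformation, which generates a three-operator KMS correlator whose decay you assert but do not establish --- an estimate at least as hard as the one you are trying to avoid, and not how the bound's $\min(|\partial A|,|\partial C|)$ factor actually arises. The paper sidesteps your Taylor expansion entirely: it applies Proposition~\ref{proposision2} to reduce to the static correlator of $\rho_\lambda^{\rm(local)}$, bounds the $\lambda$-integral by its value at the maximizing $\lambda_{\rm max}$, and applies the clustering hypothesis to that single deformed Gibbs state, yielding the first term. Without Proposition~\ref{proposision2} (or an equivalent $\tau$-uniform reduction), your argument does not close.
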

\begin{proof}

Assumption \ref{assumption1} means that the support of the deviation $\Delta H$ is exponentially localized around the boundary of region $B$.
We can choose the parameter $l$ and here we fix $l=d(A,C)/2$ for simplicity, which leads to $O(e^{-l})=O(e^{-d(A,C)})$.
In order to bound the accuracy of the LVP, we derive the following lemma that is useful to discuss the deviation between Gibbs states:
\begin{lemma}\label{lemma1}
If the two Hamiltonian $H_1$ and $H_2$ satisfy the condition $\|H_1-H_2\|\le \varepsilon$, then
\begin{align}
\|\rho_\beta(H_1)-\rho_\beta(H_2)\|&\le 2\beta\|H_1-H_2\| \le  2\beta\varepsilon.
\end{align}
Furthermore, a similar bound holds:
\begin{align}
\|\rho_\beta(H_1)-\rho_\beta(H_2)\|_1&\le 2\beta\|H_1-H_2\| \le  2\beta\varepsilon,
\end{align}
where $\|\bullet\|_1$ denotes the trace distance.
\end{lemma}
\begin{proof}
We first define $\rho_\lambda = e^{-\beta (H_1+\lambda (H_2-H_1))}/\tr[e^{-\beta (H_1+\lambda (H_2-H_1))}]$ and then we obtain
\begin{align}
    \|\rho_\beta(H_1)-\rho_\beta(H_2)\|
    &=\left\|\beta\int_0^1 d\lambda\int_0^1 d\tau \rho_\lambda^\tau (H_1-H_2)\rho_\lambda^{1-\tau} - \rho_\lambda\mathrm{Tr}[\rho_\lambda^\tau (H_1-H_2)\rho_\lambda^{1-\tau}] \right\|\\
    &\le \beta\int_0^1 d\lambda\int_0^1 d\tau \left\|\rho_\lambda^\tau (H_1-H_2)\rho_\lambda^{1-\tau}\right\| +\left\| \rho_\lambda\mathrm{Tr}[\rho_\lambda (H_1-H_2)] \right\|\\
    &\le \beta\int_0^1 d\lambda\int_0^1 d\tau \|\rho_\lambda^\tau\|\|H_1-H_2\|\|\rho_\lambda^{1-\tau}\| +\|H_1-H_2\|\\
    &\le 2\beta\|H_1-H_2\| \le  2\beta\varepsilon,
\end{align}
where we employ the relations $\|PQ\|\ \le \|P\|\|Q\|$ and $|\tr[\rho P]|\le\|\rho \|_1\|P\| = \|P\|$ for arbitrary operators $P$, $Q$, and $\rho$, satisfying $\tr[\rho]=1$ and $\rho\ge 0$. The latter inequality can be proven in a similar manner, using the H\"{o}lder's inequality: $\|PQ\|_1\le \|P\|_1\|Q\|$.
\end{proof}
From Assumption \ref{assumption1}, we can replace $\Delta H^{(A+B)}$ with $\Delta H^{(A+B)}_{\rm local}$ with an exponentially small error regarding $d(A,C)$.
The following inequality simplifies the generalized correlation function.
\begin{proposition} \label{proposision2}
(Theorem 13 in Ref.~\cite{PhysRevX.12.021022})
For an arbitrary observable $O_A$ ($O_C$) supported on the region $A$ ($C$), the following inequality holds
\begin{align}
    |\mathrm{Tr}[\rho^\tau O_A \rho^{1-\tau}O_C]-\mathrm{Tr}[\rho O_A O_C]|&\le C_\beta' \min{(|\partial A|,|\partial C|)}\|O_A\|\|O_C\|e^{-d(A,C)/\xi_\beta'}=O(e^{-d(A,C)}), 
\end{align}
for an arbitrary parameter $0\le \tau \le 1$ and for any Gibbs state $\rho$ of an arbitrary short-range Hamiltonian with any temperature.
The parameters $C_\beta'$ and $\xi_\beta'$ depend only on the temperature $\beta$ and other system parameters such as $g$ and does not depend on the system size $N$. $|\partial A|$ and $|\partial C|$ are the number of sites on the boundary region.
\end{proposition}
Note that the Hamiltonian $H+\Delta H$ can be regarded as a short-range Hamiltonian if we replace $\Delta H^{(A+B)}$  with $\Delta H^{(A+B)}_{\rm local}$, and we define $H+\Delta H_{\rm local}$ by replacing $\Delta H^{(A+B)}$  with $\Delta H^{(A+B)}_{\rm local}$. 
We also define $\rho_\lambda^{\rm (local)} =e^{-n\beta (H+\lambda \Delta H_{\rm local})}/\tr[e^{-n\beta (H+\lambda \Delta H_{\rm local})}]$.
Using these replacements and by combining Assumption \ref{assumption1}, Proposition \ref{proposision2}, Lemma \ref{lemma1}, and the assumption of the exponential clustering of the two-point correlation function, we obtain
\begin{align}
    |D^{(n)}(o_A)|&=\left|\frac{\mathrm{Tr}[e^{-n\beta(H+\Delta H)} o_A]}{\mathrm{Tr}[e^{-n\beta(H+\Delta H)}]}-\frac{\mathrm{Tr}[e^{-n\beta H} o_A]}{\mathrm{Tr}[e^{-n\beta H}]}\right|\\
    &\le \left|\frac{\mathrm{Tr}[e^{-n\beta(H+\Delta H_{\rm local})} o_A]}{\mathrm{Tr}[e^{-n\beta(H+\Delta H_{\rm local})}]}-\frac{\mathrm{Tr}[e^{-n\beta H} o_A]}{\mathrm{Tr}[e^{-n\beta H}]}\right|+\left|\frac{\mathrm{Tr}[e^{-n\beta(H+\Delta H_{\rm local})} o_A]}{\mathrm{Tr}[e^{-n\beta(H+\Delta H_{\rm local})}]}-\frac{\mathrm{Tr}[e^{-n\beta(H+\Delta H)} o_A]}{\mathrm{Tr}[e^{-n\beta(H+\Delta H)}]}\right|\\
    &\le \left|\frac{\mathrm{Tr}[e^{-n\beta(H+\Delta H_{\rm local})} o_A]}{\mathrm{Tr}[e^{-n\beta(H+\Delta H_{\rm local})}]}-\frac{\mathrm{Tr}[e^{-n\beta H} o_A]}{\mathrm{Tr}[e^{-n\beta H}]}\right|+2n\beta\|o_A\|\|\Delta H^{(A+B)}-\Delta H^{(A+B)}_{\rm local}\| \\
    &\le n\beta\int_0^1 d\lambda\int_0^1 d\tau ~\left|\mathrm{Corr}_{\rho^{\rm (local)}_\lambda}^\tau (o_A,\Delta H_{\rm local})\right|+2n\beta\|o_A\|\|\Delta H^{(A+B)}-\Delta H^{(A+B)}_{\rm local}\|\\
    &\le n\beta\int_0^1 d\lambda\int_0^1 d\tau ~\left|\mathrm{Tr}[\rho^{\rm (local)}_\lambda o_A\Delta H_{\rm local}]-\mathrm{Tr}[\rho^{\rm (local)}_\lambda o_A]\mathrm{Tr}[\rho^{\rm (local)}_\lambda \Delta H_{\rm local}]\right|\notag\\
    &\qquad+C_\beta' \min{(|\partial A|,|\partial C|)}\|o_A\|\|\Delta H_{\rm local}\|e^{-d(A,C)/\xi_\beta'}+2n\beta\|o_A\|\|\Delta H^{(A+B)}-\Delta H^{(A+B)}_{\rm local}\|\\
    &\le n\beta\left|\mathrm{Tr}[\rho_{\lambda_{\rm max}}^{\rm (local)} o_A\Delta H_{\rm local}]-\mathrm{Tr}[\rho_{\lambda_{\rm max}}^{\rm (local)} o_A]\mathrm{Tr}[\rho_{\lambda_{\rm max}}^{\rm (local)} \Delta H_{\rm local}]\right|\notag\\
    &\qquad+C_\beta' \min{(|\partial A|,|\partial C|)}\|o_A\|\|\Delta H_{\rm local}\|e^{-d(A,C)/\xi_\beta'}+2n\beta\|o_A\|\|\Delta H^{(A+B)}-\Delta H^{(A+B)}_{\rm local}\|\\
    &\le c'n\beta\|o_A\|\|\Delta H_{\rm local}\|e^{-d(A,C)/\xi'}+C_\beta' \min{(|\partial A|,|\partial C|)}\|o_A\|\|\Delta H_{\rm local}\|e^{-d(A,C)/\xi_\beta'}\notag\\
    &\qquad  +2n\beta c_1\|o_A\|  |\partial (A\cup B)_r|\exp{[-C_\beta d(A,C)/(2r)]}  
    \label{Seq:proofoftheorem1}
\end{align}
where $\lambda_{\rm max}={\rm arg~max}_{0\le \lambda \le 1} \left|\mathrm{Tr}[\rho_{\lambda}^{\rm (local)} o_A\Delta H_{\rm local}]-\mathrm{Tr}[\rho_{\lambda}^{\rm (local)} o_A]\mathrm{Tr}[\rho_{\lambda}^{\rm (local)} \Delta H_{\rm local}]\right|$ is the  value which maximizes the two-point correlation function. 
It completes our proof of Theorem \ref{Theorem3} by noting that Eq.~(\ref{Seq:proofoftheorem1}) implies
\begin{align}
    |D^{(n)}(o_A)|=O(e^{-d(A,C)}).
\end{align}
\end{proof}


\section{Exponential clustering from area law of entanglement entropy}
In this section, we show that $\frac{\|\rho^{n-1}_C\|}{\mathrm{Tr}[(\rho_{C})^n]}=O(1)$ holds when the entanglement entropy $S(\rho_C):=-\tr[\rho_C \log{\rho_C}]$ is a constant, which holds for instance in one-dimensional local Hamiltonian with constant spectral gap~\cite{Hastings_2007}.
Assuming $S(\rho_C) = O(1)$, we obtain
\begin{align}
\frac{\|\rho^{n-1}_C\|}{\mathrm{Tr}[(\rho_{C})^n]}\le \frac{\|\rho^{n-1}_C\|}{\|\rho^{n}_C\|}=\frac{1}{\lambda^{\rm max}_C},    
\end{align}
where $\lambda^{\rm max}_C$ is the largest eigenvalue of $\rho_C$.
Moreover, by using the inequality $-\log{\lambda^{\rm max}_C}\le S(\rho_C)$ which follows from the fact that the Shannon entropy is larger than the min-entropy, we obtain 
\begin{align}
    \frac{1}{\lambda^{\rm max}_C}\le \exp{[S(\rho_C)]}=O(1) 
\end{align}


\section{Additional information of numerical results in the main text}
In this section, we describe the additional information of the numerical results in Figs. 2, 3, and 4 in the main text.
\subsection{Cooling of canonical Gibbs states}
We provide details on $\chi_{\rm LVP},\chi_{\rm FVP}$ of the total energy, as shown in Fig.~2 in the main text.
The MSE $\chi$ can be decomposed into two elements 
$\chi = ({\rm Bias})^2 + \frac{{\rm Var}}{N_{\rm shot}}$:
the bias from the true value, i.e.~the expectation values of FVP, and the variance $\frac{{\rm Var}}{N_{\rm shot}}$ resulting from the number of measurements $N_{\rm shot}$.
The explicit forms of MSEs are given by
\begin{align}
    \chi_{\rm FVP}&=\sum_i\frac{{\rm Var}(\hat{h}_i)_{{\rm FVP}}}{N_{\rm shot}}  \\
    \chi_{\rm LVP}&=\sum_i|D^{(n)}(\hat{h}_i)|^2+\frac{{\rm Var}(\hat{h}_i)_{{\rm LVP}}}{N_{\rm shot}},  
\end{align}
where ${\rm Var}(\hat{h}_i)_{{\rm FVP}}$ and ${\rm Var}(\hat{h}_i)_{{\rm LVP}}$ are defined in  Eq.~(\ref{appeq:variance_FVP}) and Eq.~(\ref{appeq:variance_LVP}), respectively. 

\subsection{Error mitigation of ground state under local depolarizing noise}
\begin{figure*}[htbp]
\begin{minipage}[b]{0.41\linewidth}
    \centering
  \includegraphics[width=1.0\textwidth]{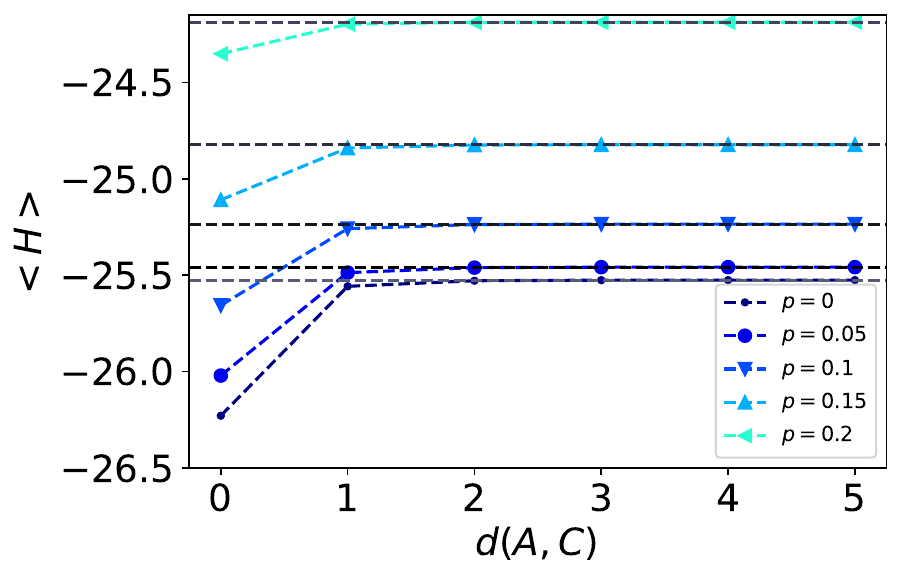}
  \subcaption{Expectation values}
\end{minipage}
  \hspace{0.04\columnwidth}
\begin{minipage}[b]{0.41\linewidth}
    \centering
  \includegraphics[width=1.0\textwidth]{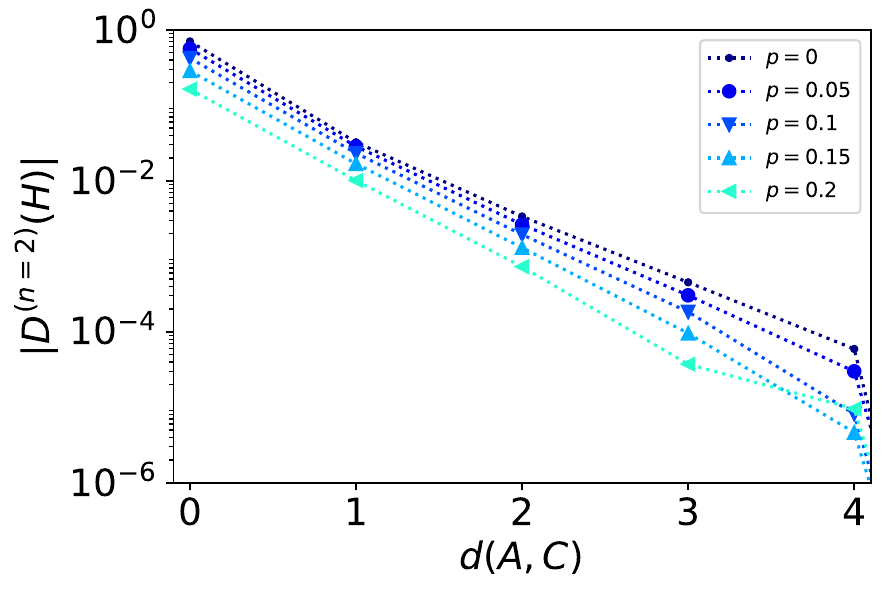}
  \subcaption{Total deviation of expectation values}
  \end{minipage}
  \caption{Expectation values of $H_{\rm TFI}$ for the ground state of one-dimensional transverse field Ising model on a non-critical point with local depolarizing noise. We fix $n=2$, $N=12$, and $p=0,0.05,0.1,0.15,0.2$. (a) Expectation values of LVP against $d(A,C)$. Horizontal dashed lines indicate the expectation values of FVP. (b) The difference of the expectation values between FVP and LVP against $d(A,C)$. 
  }
  \label{Fig:expectationvalue}
\end{figure*}
Figure \ref{Fig:expectationvalue} (a) shows the expectation values of the ground state energy of the Hamiltonian in Eq.~(\ref{eq:transverseising}) on a non-critical point under single-qubit local depolarizing noise.
As shown in Fig.~\ref{Fig:expectationvalue} (a), we can confirm the expectation values of LVP approaches those of FVP as $d(A,C)$ increases. 
Furthermore, the difference between LVP and FVP in Fig.~\ref{Fig:expectationvalue} (b) decreases exponentially, which is consistent with the analytical results for the noiseless case as presented in the main text.
For small $d(A,C)$, the expectation values of LVP are lower than the exact values and thus unphysical. 
However, these results have a meaning of upper bounds of the results obtained by using the variational reduced density matrix method~\cite{nakata2001variational} without any constraints from $N$-representability conditions~\cite{doi:10.1126/science.abb9811} (or called quantum marginal problem~\cite{klyachko2006quantum}).

In addition to the specific expectation values, we calculate the trace distance $\|D^{(n)}\|_1$ between LVP and FVP, which is also referred to as local indistinguishability~\cite{brandao2019finite},
\begin{align}
    \|D^{(n)}\|_1&=\left\|\frac{\mathrm{Tr}_B[(\rho^{(A+B)})^n ]}{\mathrm{Tr}_{A+B}[(\rho^{(A+B)})^n]}-\frac{\mathrm{Tr}_{B+C}[\rho^n]}{\mathrm{Tr}[\rho^n]}\right\|_1\notag \\
    &=\left\|\rho_{\rm LVP}^{(A)}-\rho_{\rm FVP}^{(A)} \right\|_1,
    \label{Eq:tracedistance}
\end{align}
where $\left\|D^{(n)}\right\|_1$ denotes the trace distance between two operators of $\rho_{\rm LVP}^{(A)}\left(=\frac{\mathrm{Tr}_B[(\rho^{(A+B)})^n ]}{\mathrm{Tr}_{A+B}[(\rho^{(A+B)})^n]}\right)$ and $\rho_{\rm FVP}^{(A)}\left(=\frac{\mathrm{Tr}_{B+C}[\rho^n]}{\mathrm{Tr}[\rho^n]}\right)$.
The difference of the expectation values of any observable $O$ can be bounded by the trace distance:
$|\mathrm{Tr}[\rho_1 O]-\mathrm{Tr}[\rho_2 O]|\le 2\|O\|_{\rm max}\|\rho_1 - \rho_2\|_1$~\cite{PhysRevX.11.041036,Koczor_2021},
where $\|O\|_{\rm max}$ is the operator norm of the observable $O$.
\begin{figure}[htbp]
    \centering
  \includegraphics[width=0.6\textwidth]{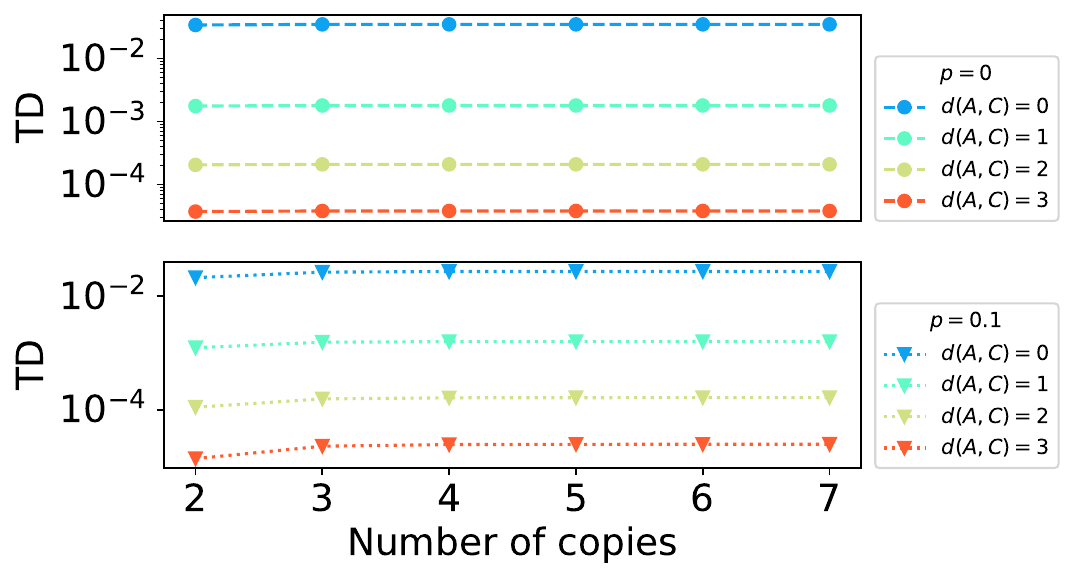}
  \caption{Trace distance (TD) in Eq.~(\ref{Eq:tracedistance}) between $\rho_{\rm LVP}^{(A)}$ and $\rho_{\rm FVP}^{(A)}$ against the number of copies $n$. Here, we consider error mitigation for the noisy ground state of one-dimensional transverse field Ising model with $\lambda=2$ (non-critical) under single-qubit local depolarizing noise model. The system size is taken as $N=12$.
  }
  \label{Fig:TDIsingnoncritical}
\end{figure}
In Fig.~\ref{Fig:TDIsingnoncritical}, we show the $\|D^{(n)}\|_1$ for the noisy ground state of one-dimensional transverse field Ising model with $\lambda=2$ (non-critical). 
We see that the $\|D^{(n)}\|_1$ rapidly converges against $n$ (we numerically confirm the exponential convergence against $n$ for this case) so that $n=2$ or $n=3$ is sufficient for practical situations.

\subsection{Noisy Gibbs states}
In the main text, we have shown that the estimation error can be indeed suppressed even when we combine the two usages, namely the error mitigation and cooling. Here, we provide numerical results on the measurement cost to show the advantage of utilizing the LVP.
Let us consider Gibbs states of the TFI Hamiltonian with $\lambda=2$ as well as in the main text.
\begin{figure}[htbp]
    \centering
  \includegraphics[width=0.6\textwidth]{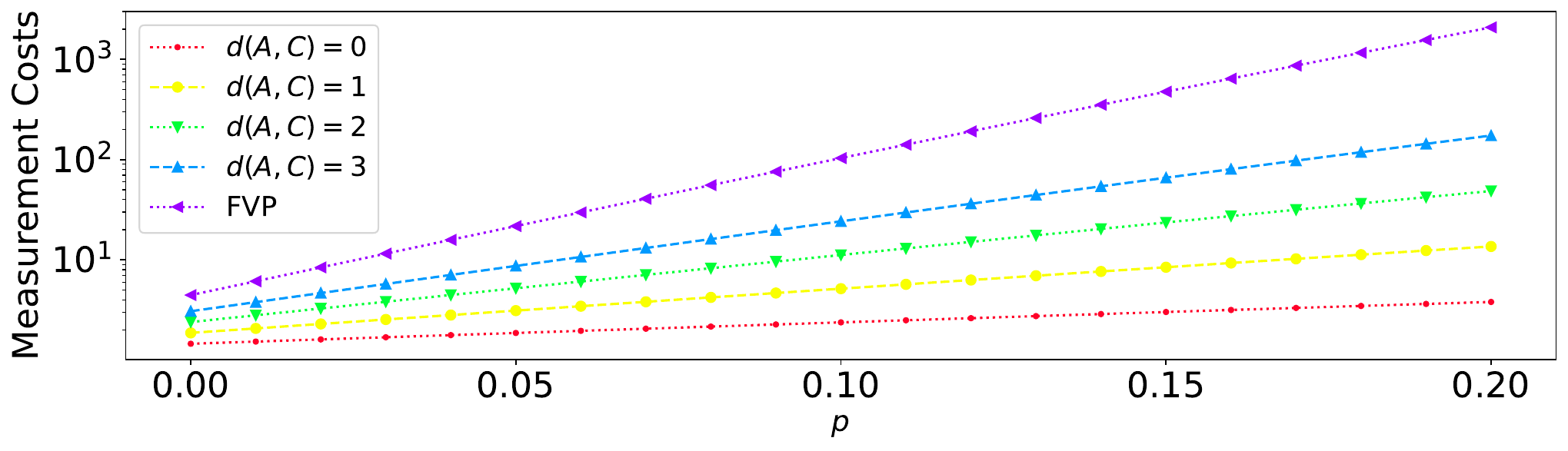}
  \caption{Measurement costs of FVP and LVP for cooling and error mitigation with regard to various error rate $p$.
  }
  \label{Fig:VCVDIsingnoncriticalnoisy}
\end{figure}
In Fig.~\ref{Fig:VCVDIsingnoncriticalnoisy}, we show the measurement costs of FVP and LVP for both cooling and error mitigation under local depolarizing noise.
We can see that the growth of the measurement costs for LVP is slower than that of FVP, which is the same behavior as the case of the noisy ground states.

\section{Other examples of the decay behavior}
In this section, we examine the $d(A,C)$ dependence of the deviation for several examples.
\subsection{One-dimensional critical system under error mitigation}\label{supplesec:critical}
We investigate how the LVP performs in the context of error mitigation when the  condition of the exponential clustering in Eq.~(\ref{Eq:exponentialclustering}) is not satisfied.
As an example, we consider the ground state of $H_{\rm TFI}$ in Eq.~(\ref{eq:transverseising}) at $\lambda=1$ (critical), which shows power-law decay of two-point correlation functions and thus does not obey the exponential decay of Eq.~(\ref{Eq:exponentialclustering}).
Nonetheless, the effect of noise can be efficiently suppressed. 
Figs.~\ref{Fig:expectationvalues_critical} (a) and (b) plot the mitigated expectation values and their deviation between FVP and LVP.
The deviation does not decay exponentially, but it still converges to zero.
In particular, for a noiseless case, it is expected that the deviation shows the power-law decay (See Fig.~\ref{Fig:HSdistanceinfty} in the following section, which contains the calculation for larger sizes based on fermionic Gaussian states).

\begin{figure*}[htbp]
\begin{minipage}[b]{0.45\linewidth}
  \includegraphics[width=1.0\textwidth]{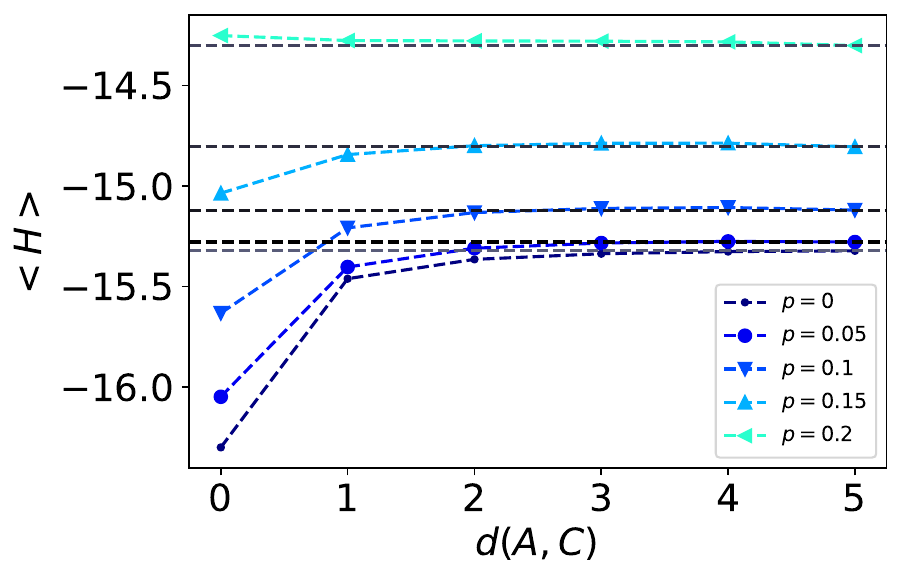}
  \subcaption{Expectation values of energy}
\end{minipage}
  \hspace{0.04\columnwidth}
\begin{minipage}[b]{0.45\linewidth}
    \centering
  \includegraphics[width=1.0\textwidth]{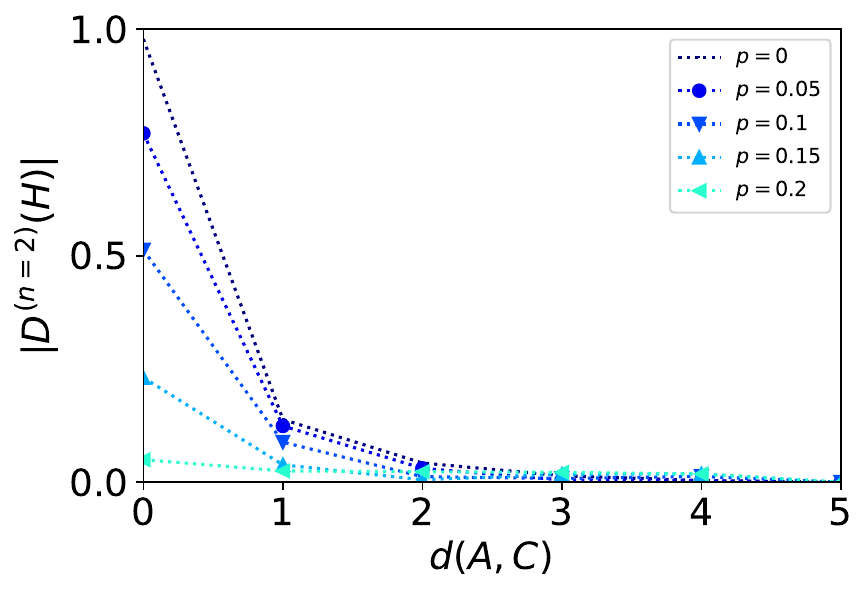}
  \subcaption{Deviation in the expectation values of energy}
  \end{minipage}
  \caption{Expectation values of the Hamiltonian estimated by the LVP for the noisy ground state of one-dimensional transverse field Ising model on a critical point with local depolarizing noise. We fix $n=2$, $N=12$, and $p=0,0.05,0.1,0.15,0.2$. (a) Expectation values against $d(A,C)$. (b) The difference of the expectation values between FVP and LVP against $d(A,C)$.
  }
  \label{Fig:expectationvalues_critical}
\end{figure*}

\begin{figure}[htbp]
    \centering
  \includegraphics[width=0.6\textwidth]{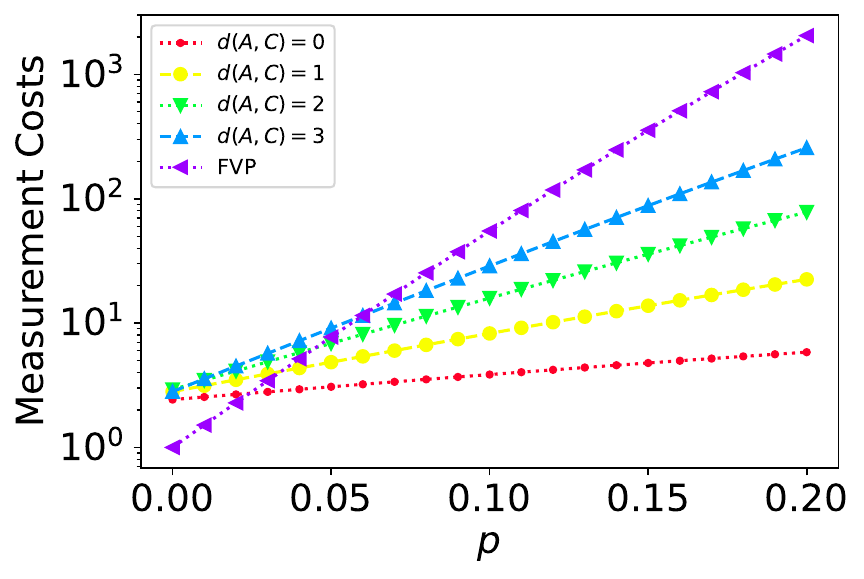}
  \caption{Measurement costs against error rate $p$ to perform the LVP and FVP for the noisy ground state of one-dimensional transverse field Ising model with $\lambda=1$ (critical) under local depolarizing noise. 
  The total number of sites is $N=12$.
  }
  \label{Fig:cost_critical}
\end{figure}
Figure \ref{Fig:cost_critical} shows the measurement costs to perform purification-based quantum simulation for the noisy ground state of the Hamiltonian in Eq.~(\ref{eq:transverseising}) at $\lambda=1$ (critical) under local depolarizing noise.
We can see that the growth of the measurement costs for LVP is slower than that of FVP, which is almost the same behavior as the non-critical case in the main text.

\begin{figure}[htbp]
    \centering
  \includegraphics[width=0.6\textwidth]{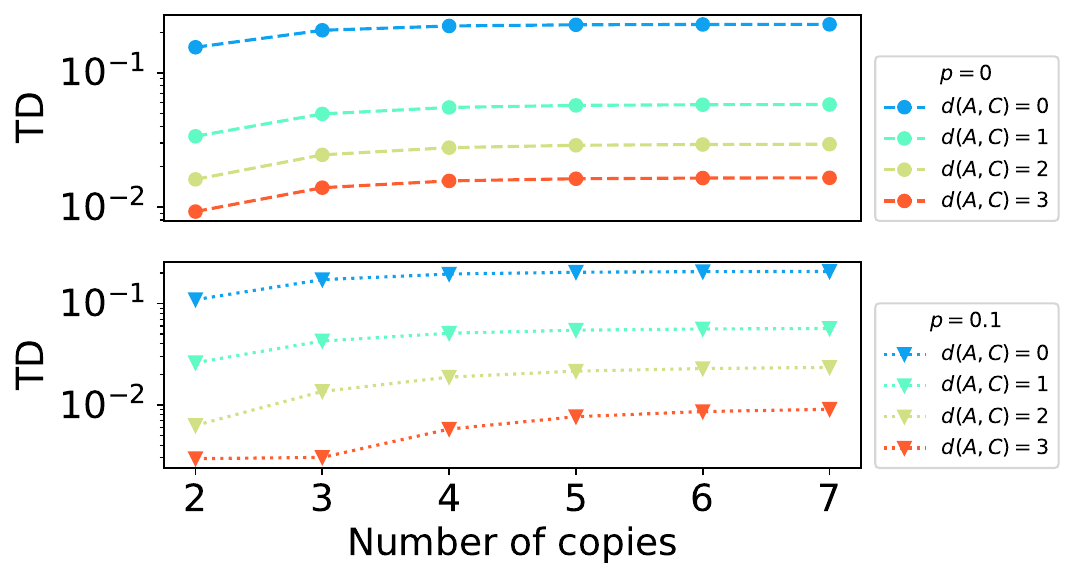}
  \caption{Trace distance (TD) in Eq.~(\ref{Eq:tracedistance}) between $\rho_{\rm LVP}^{(A)}$ and $\rho_{\rm FVP}^{(A)}$ against the number of copies $n$ for the noisy ground state of one-dimensional transverse field Ising model on a critical point under local depolarizing noise model. The total number of sites is $N=12$.
  }
  \label{Fig:TDIsingcritical}
\end{figure}
In Fig.~\ref{Fig:TDIsingcritical}, we also show the $\|D^{(n)}\|_1$ for the ground state of the one-dimensional transverse field Ising model in Eq.~(\ref{Eq:tracedistance}) with different copies $n$ (critical case). 
The convergence against $n$ shows the same behavior as the non-critical case.

\subsection{Two dimensional systems for zero temperature}
\subsubsection{Model}
We consider a non-interacting fermionic system:
\begin{align}
    H=-t\sum_{\braket{\bm{r},\bm{r}'}} (c_{\bm{r}}^\dagger c_{\bm{r}'}+c^\dagger_{\bm{r}'}c_{\bm{r}}) -\mu \sum_{\bm{r}} c_{\bm{r}}^\dagger c_{\bm{r}}.
    \label{eq:twodimensionalfreefermion}
\end{align}
Here, $\bm{r}=(x,y)$ denotes a two dimensional coordinate, $c_{\bm{r}}^\dagger, c_{\bm{r}}$ are creation and annihilation operators at site $\bm{r}$, $\sum_{\braket{\bm{r},\bm{r}'}}$ denotes a sum over a nearest-neighbor site on a square lattice, and $t(>0)$ and $\mu$ denote a hopping integral and a chemical potential, respectively. 
This Hamiltonian can be diagonalized as
\begin{align}
    H=\int_{-\pi}^{\pi}dk_x\int_{-\pi}^{\pi}dk_y (\varepsilon_{k_x,k_y}-\mu) c_{k_x,k_y}^\dagger c_{k_x,k_y},
\end{align}
where we define
\begin{align}
    \varepsilon_{k_x,k_y} &=-2t(\cos{k_x}+\cos{k_y})\\
    c_{k_x,k_y} &= \frac{1}{2\pi}\sum_{x,y} e^{ik_xx+ik_yy}c_{x,y}.
\end{align}
Throughout the following analysis, we assume that the total system is infinite for eliminating the effect of the boundary, but we can similarly conduct a calculation in a finite system by replacing the integral $\int dk_xdk_y$ into a sum $\sum_{k_x,k_y}$ over some discrete set determined by $N$.
The ground state of the Hamiltonian can be characterized by the expectation values:
\begin{equation}
    \braket{c_{k_x,k_y}^\dagger c_{k_x',k_y'}}=\delta(k_x-k'_x)\delta(k_y-k'_y)\times \Theta(\varepsilon_{k_x,k_y}\le \mu),
\end{equation}
where $\delta(\bullet)$ is the Dirac delta function and $\Theta(\bullet)$ is the Heaviside step function.

The expectation values of the product of two creation and annihilation operators are obtained as
\begin{align}
    &\braket{c_{\bm{r}}^\dagger c_{\bm{r}'}}\\
    &=\frac{1}{(2\pi)^2}\int_{-\pi}^{\pi}dk_x\int_{-\pi}^{\pi}dk_y\int_{-\pi}^{\pi}dk_x'\int_{-\pi}^{\pi}dk_y'e^{-ik_xx-ik_yy+ik_x'x'+ik_y'y}\delta(k_x-k_x')\delta(k_y-k_y')\times \Theta(\varepsilon_{k_x,k_y}\le \mu)\\
    &=\frac{1}{(2\pi)^2}\int_{-\pi}^{\pi}dk_x\int_{-\pi}^{\pi}dk_ye^{ik_x(x'-x)+ik_y(y'-y)}\times \Theta(\varepsilon_{k_x,k_y}\le \mu)\\
    &=\frac{1}{\pi^2}\int_{0}^{k_F}dk_x\cos{k_x(x'-x)} \left[ \delta_{y,y'}\arccos{(-\mu/2-\cos{k_x})}+(1-\delta_{y,y'})\frac{\sin{[(\arccos{(-\mu/2-\cos{k_x})})(y'-y)]}}{y'-y} \right],
\end{align}
where $k_F$ is a Fermi wavenumber determined by $\varepsilon_{k_{F},0}=\varepsilon_{0,k_{F}}=\mu$.
Although we can take the value of $\mu$ as $-2t\le\mu\le 2t$, we assume $-2t\le\mu\le 0$ without loss of generality.
Here we define a matrix $\Lambda$ as
\begin{align}
    \Lambda_{\bm{r},\bm{r}'}:=\braket{c_{\bm{r}}^\dagger c_{\bm{r}'}}\qquad (\bm{r},\bm{r}'\in {A}).
\end{align}
This matrix is Hermitian and can be diagonalized by using a unitary operator $U$ as
\begin{align}
    U^\dagger \Lambda U = \Xi,
\end{align}
where $\Xi$ is a diagonal matrix and we define the diagonal elements as $\xi_b$. We also define a new annihilation operator 
\begin{align}
    c_b = \sum_{{\bm{r}}\in  A} U_{{\bm{r}},b}c_{\bm{r}}
\end{align}
in order to simultaneously diagonalize the matrix $\Lambda$ and a reduced density matrix in region $A$.
Using the annihilation operator $c_b$, we obtain
\begin{align}
    \mathrm{Tr}[\rho_{A} c_b^\dagger c_{b'}]=\xi_{b} \delta_{b,b'},
\end{align}
and $\rho_{A}$ can be written as
\begin{align}
    \rho_{A}&=\frac{1}{Z_A}\exp{\left[-\sum_{b} \zeta_b c_b^\dagger c_b\right]}\\
    Z_A&= \tr\left[\exp{\left[-\sum_{b} \zeta_b c_b^\dagger c_b\right]}\right]\\
    \zeta_b&= \log{(\xi_{b}^{-1}-1)}.
\end{align}
$b$ can be regarded as a good quantum number and that is the reason for introducing the annihilation operator $c_b$.
\subsubsection{Calculation of LVP}
Since we are considering a free-fermionic system, the information regarding the reduced density operator as well as expectation values are all encoded in the correlation matrix $\Lambda$~\cite{Peschel_2003,Peschel_2009}..
Moreover, we find that the density operators involved in the LVP are also deeply related with $\Lambda$.
To see this, let us introduce the following:
\begin{align}
    (\rho_{A})^2&=\frac{1}{(Z_{A})^2 } \exp{\left[-\sum_{b} 2\zeta_b c_b^\dagger c_b\right]},
\end{align}
The expectation values of LVP are obtained by
\begin{align}
    \frac{\mathrm{Tr}[(\rho_{A})^2 c_{\bm{r}}^\dagger c_{\bm{r}'}]}{\mathrm{Tr}[(\rho_{A})^2]}&=\sum_{b,b'}U_{{\bm{r}},b}U^*_{{\bm{r}}',b'}\frac{\mathrm{Tr}[(\rho_{A})^2 c_b^\dagger c_{b'}]}{\mathrm{Tr}[(\rho_{A})^2]}\\
    &=\sum_{b,b'}U_{{\bm{r}},b}U^*_{{\bm{r}}',b'}\frac{\delta_{b,b'}}{1+e^{2\zeta_b}}\\
    &=\sum_{b}U_{{\bm{r}},b}U^*_{{\bm{r}}',b}\frac{1}{1+\left(\frac{1-\xi_b}{\xi_b}\right)^2 }\\
    &=[\Lambda^2 (\Lambda^2+(1-\Lambda)^2)^{-1} ]_{{\bm{r}},{\bm{r}}'}.
\end{align}
Similarly, the expectation values for $n$-copy cases are given by
\begin{align}
    \frac{\mathrm{Tr}[(\rho_{A})^n c_{\bm{r}}^\dagger c_{\bm{r}'}]}{\mathrm{Tr}[(\rho_{A})^n]}&=[\Lambda^n (\Lambda^n+(1-\Lambda)^n)^{-1} ]_{{\bm{r}},{\bm{r}}'}.
\end{align}
\subsubsection{Numerical results of trace distance}
For a simplest case, we consider $N_A=1$.
In this case, the reduced density matrix can be determined by a single expectation value
\begin{align}
    \braket{c_{\bm{r}}^\dagger c_{\bm{r}}} = \mathrm{Tr}[\rho_{A} c_{\bm{r}}^\dagger c_{\bm{r}}]\qquad ({\bm{r}}\in {A})
\end{align}
The trace distance between two states $\rho_{A}^{(1)}$ and $\rho_{A}^{(2)}$ can be expressed as the difference of the local density $\braket{c_{\bm{r}}^\dagger c_{\bm{r}}}$:
\begin{figure}[htbp]
    \centering
  \includegraphics[width=0.6\textwidth]{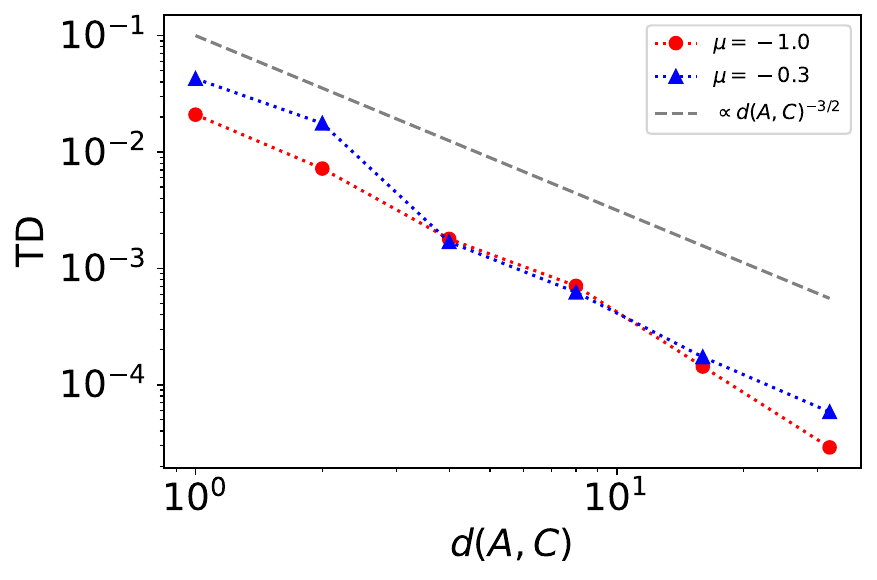}
  \caption{Trace distance in Eq.~(\ref{Eq:twodimensionalTD}) between $\rho_{A}^{(1)}$ (LVP) and $\rho_{A}^{(2)}$ (FVP) against $d(A,C)$ for the ground state of a two-dimensional non-interacting fermionic system. We fix $N_A=1$, $n=2$ copies, and $N_C\rightarrow \infty$.
  }
  \label{Fig:twodimensionalfreefermion}
\end{figure}
\begin{align}
    \|\rho_{A}^{(1)}-\rho_{A}^{(2)}\|_1 &=\frac{1}{2}\mathrm{Tr}|\rho_{A}^{(1)}-\rho_{A}^{(2)}| \\
    &= |m_1-m_2|,
    \label{Eq:twodimensionalTD}
\end{align}
where we define $m_n = \mathrm{Tr}[\rho_{A}^{(n)} c_{\bm{r}}^\dagger c_{\bm{r}}]$ ($n=1,2$).
Note that the Hilbert-Schmidt distance is the same value except for the overall factor $\sqrt{2}$
\begin{align}
    \|\rho_{A}^{(1)}-\rho_{A}^{(2)}\|_2 &=\sqrt{\mathrm{Tr}[(\rho_{A}^{(1)}-\rho_{A}^{(2)})^2]} \\
    &= \sqrt{2}|m_1-m_2|.
\end{align}
Figure~\ref{Fig:twodimensionalfreefermion} plots the trace distance between $\rho_{A}^{(1)}$ (LVP) and $\rho_{A}^{(2)}$ (FVP) for a two-dimensional system.
This graph shows the power-law decay of trace distance and its power is almost equal to $3/2$ for all $\mu$.

\subsection{Two dimensional systems for finite temperature}
We consider a non-interacting fermionic system as already defined in Eq.~(\ref{eq:twodimensionalfreefermion}).
The expectation values of the product of two creation and annihilation operators are obtained
\begin{align}
    \braket{c_{\bm{r}}^\dagger c_{\bm{r}'}}
    &=\frac{1}{(2\pi)^2}\int_{-\pi}^{\pi}dk_x\int_{-\pi}^{\pi}dk_y\int_{-\pi}^{\pi}dk_x'\int_{-\pi}^{\pi}dk_y'e^{-ik_xx-ik_yy+ik_x'x'+ik_y'y}\delta(k_x-k_x')\delta(k_y-k_y')\times F_\beta(\varepsilon_{k_x,k_y}- \mu)\\
    &=\frac{1}{(2\pi)^2}\int_{-\pi}^{\pi}dk_x\int_{-\pi}^{\pi}dk_ye^{ik_x(x'-x)+ik_y(y'-y)}\times F_\beta(\varepsilon_{k_x,k_y}- \mu)\\
    &=\frac{1}{\pi^2}\int_{0}^{\pi}dk_x\int_{0}^{\pi}dk_y\cos{k_x(x'-x)}\cos{k_y(y'-y)}\times F_\beta(\varepsilon_{k_x,k_y}- \mu),
\end{align}
where $F_\beta(\varepsilon)=1/(1+e^{\beta \varepsilon})$ is the Fermi distribution function at an inverse temperature $\beta$.
Although we can take the value of $\mu$ as $-2t\le\mu\le 2t$, we assume $-2t\le\mu\le 0$ without loss of generality.
Here we define a matrix $\Lambda$ as
\begin{align}
    \Lambda_{\bm{r},\bm{r}'}:=\braket{c_{\bm{r}}^\dagger c_{\bm{r}'}}\qquad (\bm{r},\bm{r}'\in {A}).
\end{align}
This matrix is Hermitian and can be diagonalized by using a unitary operator $U$
\begin{align}
    U^\dagger \Lambda U = \Xi,
\end{align}
where $\Xi$ is a diagonal matrix and we define the diagonal elements as $\xi_b$. We also define a new annihilation operator
\begin{align}
    c_b = \sum_{{\bm{r}}\in A} U_{{\bm{r}},b}c_{\bm{r}}
\end{align}
and we obtain
\begin{align}
    \mathrm{Tr}[\rho_{A} c_b^\dagger c_{b'}]=\xi_{b} \delta_{b,b'}.
\end{align}

\begin{figure}[htbp]
    \centering
  \includegraphics[width=0.6\textwidth]{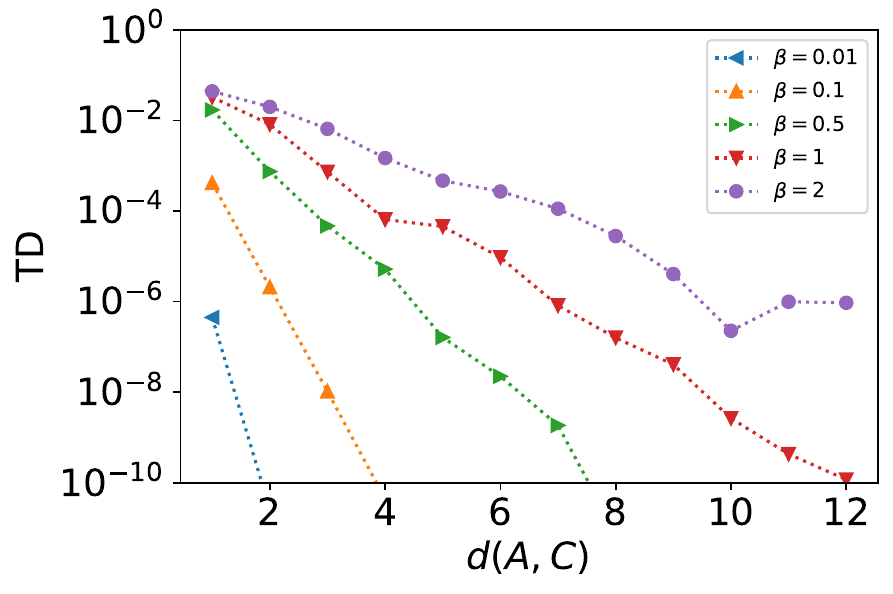}
  \caption{Trace distance in Eq.~(\ref{Eq:twodimensionalTD}) between $\rho_{A}^{(1)}$ (LVP) and $\rho_{A}^{(2)}$ (FVP) against $d(A,C)$ for a two-dimensional non-interacting fermionic system with finite temperatures. We fix $\mu=-0.3$, $N_A=1$, $n=2$ copies, and $N_C\rightarrow \infty$.
  }
  \label{Fig:twodimensionalfreefermion_finite_temperature}
\end{figure}
Figure~\ref{Fig:twodimensionalfreefermion_finite_temperature} plots the trace distance between $\rho_{A}^{(1)}$ (LVP) and $\rho_{A}^{(2)}$ (FVP) for a two-dimensional system with finite temperatures.
This graph shows the exponential decay of trace distance and its correlation length differs in each $\beta$.

\subsection{Three dimensional systems for finite temperature}
We can calculate higher dimensional cases following similar procedures.
In this subsection, we consider the three dimensional non-interacting fermionic system with finite temperatures.
We define a matrix $\Lambda$ as 
\begin{align}
    \Lambda_{\bm{r},\bm{r}'}&:=\braket{c_{\bm{r}}^\dagger c_{\bm{r}'}}\qquad (\bm{r},\bm{r}'\in {A})\\
    &=\frac{1}{\pi^3}\int_{0}^{\pi}dk_x\int_{0}^{\pi}dk_y\int_{0}^{\pi}dk_z\cos{k_x(x'-x)}\cos{k_y(y'-y)}\cos{k_y(z'-z)}\times F_\beta(\varepsilon_{k_x,k_y,k_z}- \mu),
\end{align}
where $\bm{r}=(x,y,z),\bm{r}'=(x',y',z')$ denotes a three dimensional coordinate and
\begin{align}
    \varepsilon_{k_x,k_y,k_z} &=-2t(\cos{k_x}+\cos{k_y}+\cos{k_z}).
\end{align}
\begin{figure}[htbp]
    \centering
  \includegraphics[width=0.6\textwidth]{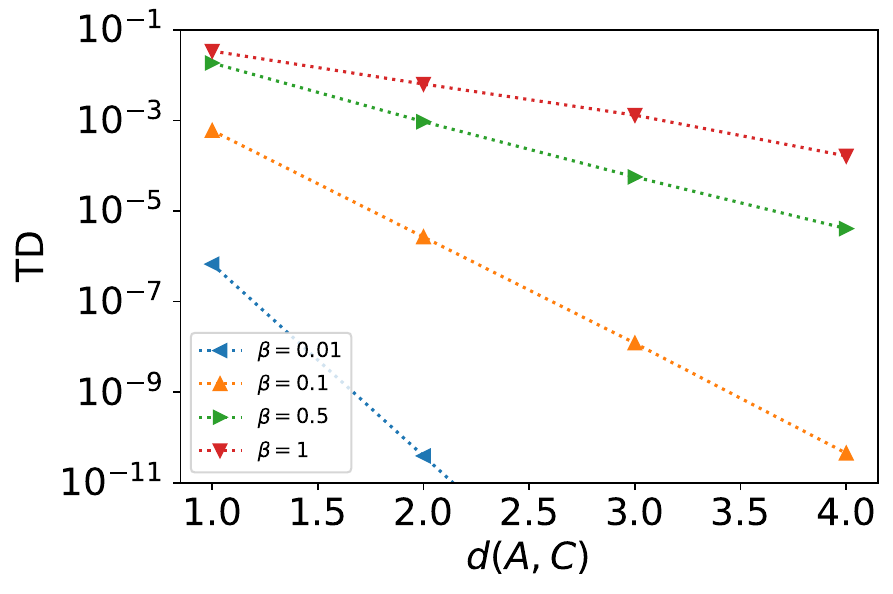}
  \caption{Trace distance in Eq.~(\ref{Eq:twodimensionalTD}) between $\rho_{A}^{(1)}$ (LVP) and $\rho_{A}^{(2)}$ (FVP) against $d(A,C)$ for a three-dimensional non-interacting fermionic system with finite temperatures. We fix $\mu=-0.3$, $N_A=1$, $n=2$ copies, and $N_C\rightarrow \infty$.
  }
  \label{Fig:threedimensionalfreefermion_finite_temperature}
\end{figure}
Figure~\ref{Fig:threedimensionalfreefermion_finite_temperature} plots the trace distance in Eq.~(\ref{Eq:twodimensionalTD}) between $\rho_{A}^{(1)}$ (LVP) and $\rho_{A}^{(2)}$ (FVP) for a three-dimensional system with finite temperatures.
This graph shows the exponential decay of trace distance and its correlation length differs in each $\beta$.

\section{Detail calculations of Hilbert-Schmidt distance between two fermionic Gaussian states}\label{App:criticalHS}
In this section, we consider the the deviation between LVP and FVP for fermionic Gaussian states~\cite{PhysRevLett.90.227902,https://doi.org/10.48550/arxiv.quant-ph/0304098} in order to explain the behavior of the deviation between LVP and FVP for the critical and noiseless case as numerically shown in Sec.~\ref{supplesec:critical}.
Hilbert-Schmidt distance between two fermionic Gaussian states can be calculated in a similar manner of the fidelity~\cite{PhysRevE.89.022102}.
\subsection{Basic properties}
It is convenient to introduce the Majorana representation of fermions as
\begin{align}
    w_{2l-1} &=c_{l}+ c_l^\dagger \\
    w_{2l} &=i(c_{l}- c_l^\dagger),
\end{align}
where $c_{l}$ ($c_{l}^\dagger$) is an annihilation (creation) operator of the $l$-th fermion and
$w_{l}$ denotes the $l$-th Majorana operator that satisfies the canonical anti-commutation relation
\begin{align}
    \{w_{l},w_{l'} \}&=2\delta_{l,l'}.
\end{align}
Fermionic Gaussian states are defined as the Gaussian forms of $w_{l}$.
\begin{align}
    \rho=\frac{1}{Z} \exp{\left[-\frac{i}{4}\sum_{i,j}G_{i,j} w_iw_j\right]},
\end{align}
where $i,j=1,2,\cdots,2N$, $Z=\tr[e^{-\frac{i}{4}\sum_{i,j}G_{i,j} w_iw_j}]$ is a normalization factor, and $G$ is a real and anti-symmetric matrix. 
Fermionic Gaussian states are characterized completely by
 the correlation matrix $M$ as
\begin{align}
    M_{i,j}:=\frac{1}{2}\tr[\rho [w_{i},w_{j} ]].
\end{align}
The relation between the matrix $M$ and $G$ is given by 
\begin{align}
    M&= \tanh{\frac{iG}{2}}.
\end{align}
These Hermitian matrices $M$ and $iG$ can be diagonalized in the same basis.
The purity of fermionic Gaussian state can be calculated as
\begin{align}
    \mathrm{Tr}\left[\rho^2\right]
    &=\sqrt{\mathrm{det}\left[\frac{1+M^2}{2}\right]}.
\end{align}
Hilbert-Schmidt distance between two fermionic Gaussian states can be rewritten as a simplified form
\begin{align}
    \|\rho_1-\rho_2\|_2 &=\sqrt{\mathrm{Tr}[(\rho_1-\rho_2)^2]} \\
    &= \sqrt{\mathrm{Tr}[(\rho_1)^2]+\mathrm{Tr}[(\rho_2)^2]-2\mathrm{Tr}[\rho_1\rho_2]}
\end{align}
The first and second terms are the purity of each fermionic Gaussian states.
The third term can be calculated as
\begin{align}
    \mathrm{Tr}[\rho_1\rho_2]
    &=\sqrt{\mathrm{det}\left[\frac{1+M_2M_1}{2}\right]},
\end{align}
where $M_1$ and $M_2$ are the correlation matrices of $\rho_1$ and $\rho_2$, respectively.
Note that although 
the computation of the Hilbert-Schmidt distance in many-body systems requires exponentially large cost with the number of fermionic modes, this is not the case for fermionic Gaussian states which can be calculated from correlation matrices $M_1$ and $M_2$ so that it requires only polynomial cost.
\subsection{Localized virtual purification for Fermionic Gaussian states}
The LVP procedure preserves the structure of the fermionic Gaussian states and 
doubles all the elements of the matrix $G$ so that
\begin{align}
    \rho_{A+B}^{(n=2)}=\frac{(\rho_{A+B})^2}{\tr[(\rho_{A+B})^2]}=\frac{1}{Z^{(n=2)}_{A+B}} \exp{\left[-\frac{i}{2}\sum_{i,j\in A+B}G_{i,j} w_iw_j\right]},
\end{align}
where $i,j=1,2,\cdots,2(N_A+N_B)$ and $Z^{(n=2)}_{A+B}=\tr[e^{-\frac{i}{2}\sum_{i,j}G_{i,j} w_iw_j}]$ is a normalization factor. 
Therefore the purified states are characterized by the correlation matrix $M_{\rm LVP}^{(n)}$.
Moreover, $M_{\rm LVP}^{(n)}$ can be expressed as a combination of the correlation matrix $M$ of the original states
\begin{align}
    M_{\rm LVP}^{(n=2)}&= \tanh{iG}\\
    &=\frac{2M}{1+M^2},
\end{align}
and for any copies $n$ we obtain
\begin{align}
    M_{\rm LVP}^{(n)}&= \tanh{\frac{inG}{2}}\\
    &=\frac{(1+M)^n-(1-M)^n}{(1+M)^n+(1-M)^n}.
\end{align}

\subsection{An example: one-dimensional transverse field Ising model}
The Hamiltonian of one-dimensional transverse field Ising model in Eq.~(\ref{eq:transverseising}) 
can be mapped into a free-fermionic Hamiltonian and the ground state (the Gibbs states) is characterized by a fermionic Gaussian pure state (the fermionic Gaussian states).
The correlation matrix of these states
is written as
\begin{align}
M_{\rm TFI}=
    \begin{bmatrix}
       \Pi_0 &\Pi_1 &\Pi_2&\cdots&\Pi_{N_A+N_B-1}\\
       \Pi_{-1} &\Pi_0 &\Pi_1&\cdots&\vdots\\
       \Pi_{-2} &\Pi_{-1} &\Pi_0&\cdots&\vdots\\
       &&\ddots&&\\
       \Pi_{1-N_A-N_B}&&\cdots&&\Pi_{0}
    \end{bmatrix},
\end{align}
where $\Pi_l$ is a $2\times 2$ matrix defined by
\begin{align}
    \Pi_l =
    \begin{bmatrix}
        0&g_l\\
        -g_{-l}&0
    \end{bmatrix}.
\end{align}
Assuming $N\rightarrow \infty$, $g_l$ can be written as an integral:
\begin{align}
    g_l&=\frac{i}{2\pi}\int_{-\pi}^\pi dq e^{iql}e^{i\theta_{q}}\frac{1-e^{-2 \beta \varepsilon_{q}}}{1+e^{-2 \beta \varepsilon_{q}}},
    \end{align}
    where $\varepsilon_q$ and $\theta_{q}$ are defined by
    \begin{align}
    \varepsilon_q&= \sqrt{(\cos{q}+\lambda)^2+(\sin{q})^2}\label{appeq:singleenergyTFIM}\\
    \cos{\theta_{q}}&=\frac{\cos{q}+\lambda}{\sqrt{(\cos{q}+\lambda)^2+(\sin{q})^2}}\\
    \sin{\theta_{q}}&=\frac{\sin{q}}{\sqrt{(\cos{q}+\lambda)^2+(\sin{q})^2}}.
\end{align}
If we consider the ground state $(\beta \rightarrow \infty )$ on a critical point $(\lambda=1)$, then
we can easily calculate the integral:
\begin{align}
    g_l = \frac{-i}{\pi (l+1/2)}.
\end{align}

\begin{figure}[htbp]
    \centering
  \includegraphics[width=0.6\textwidth]{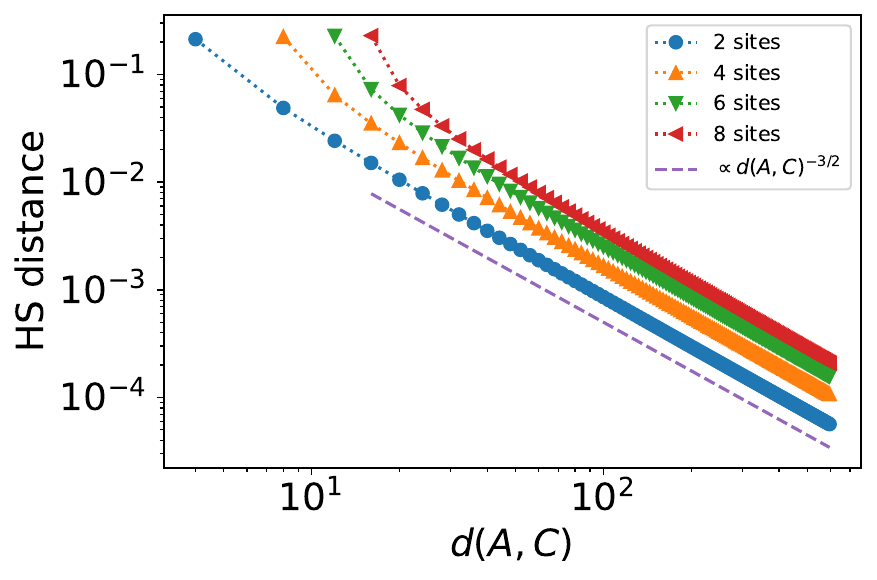}
  \caption{Hilbert-Schmidt distance between FVP and LVP against $d(A,C)$ for a noiseless case on a critical point. We fix $N_A=2,4,6,8$ and $N_C\rightarrow \infty$.
  }
  \label{Fig:HSdistanceinfty}
\end{figure}
It is known that the one-dimensional transverse field Ising model implies a power-law decay of the correlation function on a critical point ($\lambda=1$).
Figure~\ref{Fig:HSdistanceinfty} shows the Hilbert-Schmidt distance between FVP and LVP against $d(A,C)$  on a critical point. This graph actually shows the power-law decay and we can see that its power is equal to $3/2$. In this case, we take the limit of $N_C\rightarrow \infty$ and we can neglect the finite size effect of the total system.

\section{Error mitigation by LVP under simple noise model}
\subsection{Global depolarizing noise channel}
In this subsection, we derive the inequality of Theorem \ref{Theorem2} for the noisy ground states under the global depolarizing noise channel.
We consider the global depolarizing noise as a simple noise model:
\begin{align}
    \rho=(1-p)\rho_0+p \frac{\mathbb{I}}{2^N},
\end{align}
where $p$ ($0\le p\le 1$) is an error rate, $\rho_0$ denotes a noiseless pure state, and $\mathbb{I}$ is  the identity operator on the total system.
This noise model is widely seen as an effective noise model for a quantum circuit with deep depth~\cite{PhysRevLett.127.270502,PhysRevE.104.035309,tsubouchi2023universal,Qin_2023,foldager2023shallow}.

In the following analysis, we assume the two assumptions: (i) $N\gg1$ and (ii) $\tr[(\rho^{(A+B)}_0)^2]\gg \frac{1}{2^{N_A+N_B}}$ and $\tr_{A+B}[(\rho^{(A+B)}_0)^2]$ is independent of $N_A$ and $N_B$.
The first assumption always holds when we choose a sufficiently large $N$, and we can neglect the terms with an exponentially small factor $1/2^N$.
The second assumption means that $\rho^{(A+B)}_0$, a reduced density operator of $\rho_0$ in regions $A$ and $B$, has a relatively large value of the purity compared with that of the completely mixed state (note that $\frac{1}{2^{N_A+N_B}}$ is the value of the purity for the completely mixed states). For a unique ground state of a gapped system, which satisfies Eq.~(\ref{Eq:exponentialclustering}) in the main text, this assumption holds because the purity can be bounded by a constant independent of $N_A+N_B$ due to the the area law of the entanglement entropy.

We first consider 
the expectation values computed by the FVP as
\begin{align}
    \frac{\tr[\rho ^2 o_A]}{\mathrm{Tr}[\rho ^2]}&=\frac{(1-p)\left[1-\frac{(2^N-2)p}{2^N}\right]}{(1-p)\left[1-\frac{(2^N-2)p}{2^N}\right]+\frac{p^2}{2^N}}\tr[\rho_0  o_A]+\frac{1}{2^{2N}}\frac{p^2\tr[  o_A]}{(1-p)\left[1-\frac{(2^N-2)p}{2^N}\right]+\frac{p^2}{2^N}}\\
    &=\tr[\rho_0  o_A]-\frac{\frac{p^2}{2^N}\tr[\rho_0  o_A]}{(1-p)\left[1-\frac{(2^N-2)p}{2^N}\right]+\frac{p^2}{2^N}}+\frac{1}{2^{2N}}\frac{p^2\tr[  o_A]}{(1-p)\left[1-\frac{(2^N-2)p}{2^N}\right]+\frac{p^2}{2^N}}\\
    &=\tr[\rho_0  o_A] + \delta_1,
\end{align}
where $o_A$ is an arbitrary observable
supported on the region $A$.
$\delta_1$ is exponentially small regarding $N$ under the first assumption.

On the other hand, we calculate the expectation value of $o_A$ via LVP from a noisy reduced density operator
    $\rho^{(A+B)}=(1-p)\rho^{(A+B)}_0+p \frac{\mathbb{I}^{(A+B)}}{2^{N_A+N_B}}$, where we define $\mathbb{I}^{(A+B)}$ by the identity operator on the region $A$ and $B$, as
\begin{align}
    \frac{\tr_{A+B}[(\rho^{(A+B)})^2o_A]}{\mathrm{Tr}_{A+B}[(\rho^{(A+B)}) ^2]}
    &=\frac{(1-p)^2\tr_{A+B}[(\rho^{(A+B)}_0)^2o_A]  + 2p(1-p) \tr_{A+B}[\rho^{(A+B)}_0o_A] 2^{-(N_A+N_B)}+p^2\tr_{A+B}[o_A]2^{-2(N_A+N_B)} }{(1-p)^2\mathrm{Tr}_{A+B}[(\rho^{(A+B)}_0) ^2]+p(2-p) 2^{-(N_A+N_B)} }\\
    &=\frac{\tr_{A+B}[(\rho^{(A+B)}_0)^2o_A]}{\mathrm{Tr}_{A+B}[(\rho^{(A+B)}_0) ^2]}+ \frac{ p(1-p)}{2^{N_A+N_B}(1-p)^2\mathrm{Tr}_{A+B}[(\rho^{(A+B)}_0) ^2]+p(2-p) } \notag\\
    &\qquad\times\left[ 2 \tr_{A+B}[\rho^{(A+B)}_0o_A] - \frac{\tr_{A+B}[(\rho^{(A+B)}_0)^2o_A]}{\mathrm{Tr}_{A+B}[(\rho^{(A+B)}_0) ^2]}\frac{2-p }{1-p}\right]\notag\\
    &\qquad+\frac{p^2\tr_{A+B}[o_A]2^{-2(N_A+N_B)} }{(1-p)^2\mathrm{Tr}_{A+B}[(\rho^{(A+B)}_0) ^2]+p(2-p) 2^{-(N_A+N_B)} }\\
    &=\frac{\tr_{A+B}[(\rho^{(A+B)}_0)^2o_A]}{\mathrm{Tr}_{A+B}[(\rho^{(A+B)}_0) ^2]}+\delta_2.
\end{align}
Here, $\delta_2$ is exponentially small regarding $N_A+N_B$ from the second assumption:
\begin{align}
    (0\le)\frac{ p(1-p)}{2^{N_A+N_B}(1-p)^2\mathrm{Tr}_{A+B}[(\rho^{(A+B)}_0) ^2]+p(2-p) }\le \frac{p}{1-p} \frac{1}{2^{N_A+N_B}\mathrm{Tr}_{A+B}[(\rho^{(A+B)}_0) ^2]}=O(e^{-(N_A+N_B)}).
\end{align}
%
Therefore we obtain
\begin{align}
    |D^{(n=2)}(o_A)|&=\left|\frac{\mathrm{Tr}[\rho ^2o_A]}{\mathrm{Tr}[\rho ^2]}-\frac{\mathrm{Tr}_{A+B}[(\rho^{(A+B)}) ^2o_A]}{\mathrm{Tr}_{A+B}[(\rho^{(A+B)}) ^2]} \right|\\
    &\le |D^{(n=2)}_0(o_A)|+ \left|\delta_1\right| + \left|\delta_2\right|.
    \label{Seq:resultGlobaldepolarizing}
\end{align}
The first term of Eq.~(\ref{Seq:resultGlobaldepolarizing}) $|D^{(n=2)}_0(o_A)|$ exponentially decays in terms of $d(A,C)$ from Eq.~(\ref{Eq:exponentialclustering}) in the main text, and the rest of  the terms in Eq.~(\ref{Seq:resultGlobaldepolarizing}) can be negligible ($\delta_1=O(e^{-N})$ and $\delta_2 =O(e^{-(N_A+N_B)})$) because of the first and second assumption.

\subsubsection{Comparison of measurement costs between LVP and FVP under global depolarizing noise}
The measurement costs of LVP are typically larger than those of FVP:
\begin{align}
    \mathrm{Tr}[\rho ^2]&\simeq (1-p)^2\\
     \mathrm{Tr}[(\rho^{(A+B)}) ^2]&\simeq (1-p)^2\mathrm{Tr}[(\rho^{(A+B)}_0) ^2]\le (1-p)^2.
\end{align}
Nonetheless the difference is not crucial compared with the effect of errors in the controlled-derangement circuits, which gives exponentially large measurement costs for FVP~\cite{vikstål2022study} as described in the main text, if the above second assumption that $\mathrm{Tr}[(\rho^{(A+B)}_0) ^2]$ is independent of $N_A+N_B$ holds.

\subsection{Local and independent noise channel}
In this subsection, we investigate the effect of noise on the correlation function in Eq.~(\ref{eq:noisytwopointcorrelationfunction}) under local and independent noise channels such as single-qubit depolarizing noise.
We consider a gate-wise error model~\cite{PhysRevX.8.031027}:
\begin{align}
\mathcal{E}_i= (1-p)\mathcal{I}+ p \mathcal{E}'_i,
\end{align}
where $p$ is an error rate and $\mathcal{I}$ and $\mathcal{E}'_i$ are the identity map and an error channel, respectively.
The total error map is defined by
\begin{align}
\mathcal{E}&=\prod_{i=1}^{N_g}\mathcal{E}_i=\prod_{i=1}^{N_g} [(1-p)\mathcal{I}_i+ p \mathcal{E}'_i]\\
&=\sum_{i=0}^{N_g}\binom{N_g}{i}(1-p)^{N_g-i}p^{i}\mathcal{X}_i\\
&=(1-N_gp)\mathcal{I}+ N_gp \mathcal{X}_1 +O(p^2),
\end{align}
where $\mathcal{X}_i$ is the average error channel of $i$-th times.
Here, we assume that an error rate $p$ is sufficiently small and we are interested in the first-order term of $p$.
Under this assumption, we obtain a noisy state $\rho$ from a noiseless state $\rho_0$
\begin{align}
    \rho=\mathcal{E}[\rho_0]&=(1-N_gp)\rho_0+N_gp \mathcal{X}_1[\rho_0]+O(p^2)\\
    &=(1-N_gp)\rho_0+N_gp \rho_1+O(p^2),
    \label{Eq:gatewiseerror}
\end{align}
where we define $\rho_1=\mathcal{X}_1[\rho_0]$.
From this representation, we can calculate $\rho ^n$ as
\begin{align}
    \rho ^n&=\left[(1-N_gp)\rho_0+N_gp \rho_1\right]^n\\
    &=(1-nN_g p)\rho_0+N_gp(\rho_0\rho_1+(n-2)\rho_0\rho_1\rho_0+\rho_1\rho_0) +O(p^2) \\
    &=(1-nN_g p)\rho_0+nN_gp\rho_1' +O(p^2),
\end{align}
and $\mathrm{Tr}[\rho ^n]$ as
\begin{align}
    \mathrm{Tr}[\rho ^n]&=(1-nN_g p)+nN_gp\mathrm{Tr}[\rho_1'] +O(p^2),
\end{align}
where we define $\rho_1'$ by $\rho_1'=\frac{1}{n}[\rho_0\rho_1+(n-2)\rho_0\rho_1\rho_0+\rho_1\rho_0$].
Using these equations, we obtain the purified density matrix of FVP $\rho_{\rm FVP}$:
\begin{align}
    \rho_{\rm FVP}=\frac{\rho ^n}{\mathrm{Tr}[\rho ^n]}&=\frac{(1-nN_g p)\rho_0+nN_gp\rho_1'}{(1-nN_g p)+nN_gp\mathrm{Tr}[\rho_1']}+O(p^2)\\
    &=(1-nN_gp\mathrm{Tr}[\rho_1'])\rho_0+nN_gp\rho_1'+O(p^2).
\end{align}
On the other hand, the reduced density matrix in region $A$ and $B$ is given by
\begin{align}
    \rho^{(A+B)}=(1-N_gp)\rho^{(A+B)}_0+N_gp\rho^{(A+B)}_1 +O(p^2),
\end{align}
where we define $\rho^{(A+B)}_0=\tr_C[\rho_0]$ and $\rho^{(A+B)}_1=\tr_C[\rho_1]$ for simplicity.
From the following equations
\begin{align}
    (\rho^{(A+B)})^n&=(1-nN_gp)(\rho^{(A+B)}_0)^n+ N_gp(\sum_{k=0}^{n-1}(\rho^{(A+B)}_0)^k\rho^{(A+B)}_1(\rho^{(A+B)}_0)^{n-k-1}) +O(p^2),
    \end{align}
    and
    \begin{align}
    \mathrm{Tr}_{A+B}[(\rho^{(A+B)}) ^n]&=(1-nN_gp)\mathrm{Tr}_{A+B}[(\rho^{(A+B)}_0) ^n]+nN_gp\mathrm{Tr}_{A+B}[(\rho^{(A+B)}_0) ^{n-1}\rho^{(A+B)}_1 ]+O(p^2),
\end{align}
the purified density matrix of LVP can be calculated as
\begin{align}
    \frac{(\rho^{(A+B)})^n}{\mathrm{Tr}_{A+B}[(\rho^{(A+B)}) ^n]}
    =&\frac{(1-nN_gp)(\rho^{(A+B)}_0)^n+ N_gp(\sum_{k=0}^{n-1}(\rho^{(A+B)}_0)^k\rho^{(A+B)}_1(\rho^{(A+B)}_0)^{n-k-1})}{\mathrm{Tr}_{A+B}[(\rho^{(A+B)}_0) ^n]}\notag\\
    &+nN_gp\frac{1-\frac{\mathrm{Tr}_{A+B}[(\rho^{(A+B)}_0) ^{n-1}\rho^{(A+B)}_1 ]}{\mathrm{Tr}_{A+B}[(\rho^{(A+B)}_0) ^n]}}{{\mathrm{Tr}_{A+B}[(\rho^{(A+B)}_0) ^n]}}(\rho^{(A+B)}_0)^n +O(p^2)\\
    =&\frac{(\rho^{(A+B)}_0)^n+ N_gp(\sum_{k=0}^{n-1}(\rho^{(A+B)}_0)^k\rho^{(A+B)}_1(\rho^{(A+B)}_0)^{n-k-1})}{\mathrm{Tr}_{A+B}[(\rho^{(A+B)}_0) ^n]}\notag\\
    &-nN_gp\frac{\mathrm{Tr}_{A+B}[(\rho^{(A+B)}_0) ^{n-1}\rho^{(A+B)}_1 ]}{\mathrm{Tr}_{A+B}[(\rho^{(A+B)}_0) ^n]^2}(\rho^{(A+B)}_0)^n +O(p^2).
\end{align}
Therefore the difference of the noisy state in Eq.~(\ref{Eq:gatewiseerror}) can be bounded as
\begin{align}
    |D^{(n)}(o_A)|&=\left|\frac{\mathrm{Tr}[\rho ^no_A]}{\mathrm{Tr}[\rho ^n]}-\frac{\mathrm{Tr}_{A+B}[(\rho^{(A+B)}) ^no_A]}{\mathrm{Tr}_{A+B}[(\rho^{(A+B)}) ^n]} \right|\\
    &\le |D_0^{(n)}(o_A)|+nN_gp\left|-\tr[\rho_1']\tr[\rho_0 o_A]+\mathrm{Tr}[\rho_1'o_A]+\frac{\mathrm{Tr}_{A+B}[(\rho^{(A+B)}_0) ^{n-1}\rho^{(A+B)}_1 ]}{\mathrm{Tr}_{A+B}[(\rho^{(A+B)}_0) ^n]^2}\mathrm{Tr}_{A+B}[(\rho^{(A+B)}_0)^no_A]\right.\notag\\
    &\left.-\frac{1}{n}\frac{ \mathrm{Tr}_{A+B}[(\sum_{k=0}^{n-1}(\rho^{(A+B)}_0)^k\rho^{(A+B)}_1(\rho^{(A+B)}_0)^{n-k-1})o_A]}{\mathrm{Tr}_{A+B}[(\rho^{(A+B)}_0) ^n]} \right|+O(p^2).
    \label{app:localnoise}
\end{align}
The first term in Eq.~(\ref{app:localnoise}) is the noiseless correlation function $|D^{(n)}_0(o_A)|$, which decays exponentially in terms of $d(A,C)$ as shown in Theorem 2, and the second term in Eq.~(\ref{app:localnoise}) denotes the difference of the bias term between LVP and FVP due to an effect of noise up to a first-order approximation in $p$.
Although each bias term, such as $-\tr[\rho_1']\tr[\rho_0 o_A]+\mathrm{Tr}[\rho_1'o_A]$ for FVP, is not equal to zero in general, we numerically confirm that the difference of them converges to zero with $d(A,C)$ increasing, even if $p$ is not a small value.
We have explained this behavior in Fig.~\ref{Fig:expectationvalue} for the case of local depolarizing noise.
Additionally, we show numerical results for another noise model.
\begin{figure}[htbp]
    \centering
  \includegraphics[width=0.9\textwidth]{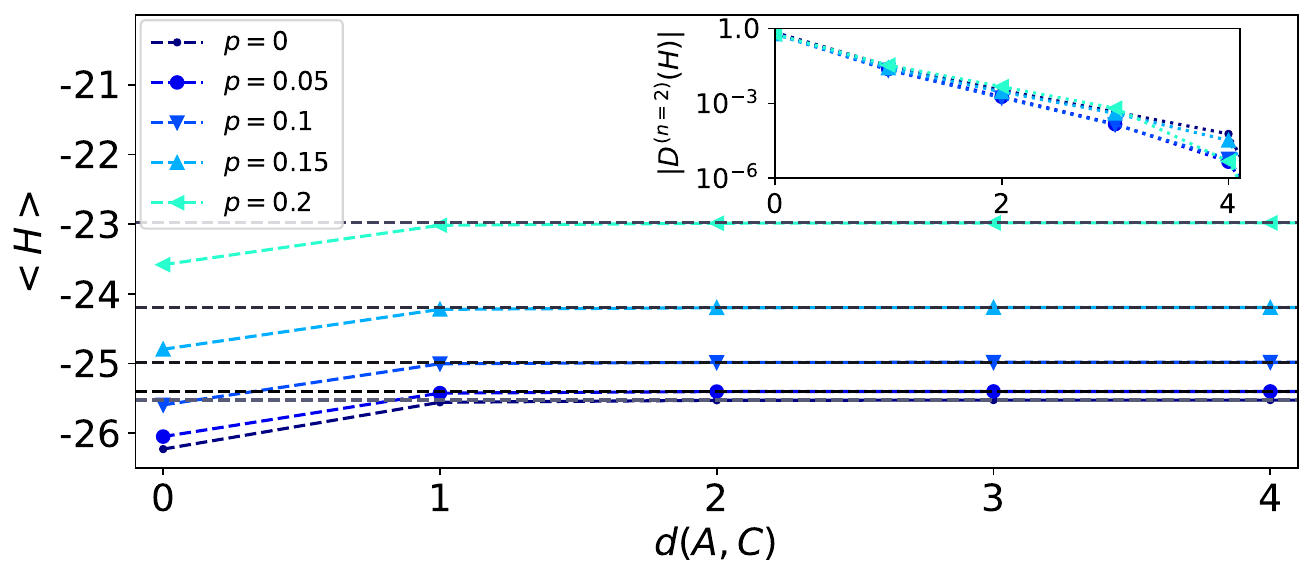}
  \caption{Expectation values of $H_{\rm TFI}$ estimated by the LVP against $d(A,C)$ for the ground state of one-dimensional transverse field Ising model on a non-critical point with local dephasing noise. We fix $n=2$, $N=12$, and $p=0,0.05,0.1,0.15,0.2$. 
  Horizontal dashed lines indicate the expectation values of FVP. (inset) The difference of the expectation values between FVP and LVP against $d(A,C)$. 
  }
  \label{Fig:mitigation_dephasing}
\end{figure}
Figure \ref{Fig:mitigation_dephasing} shows the results of similar calculation as in Fig.~\ref{Fig:expectationvalue}, with the local depolarizing noise replaced with single-qubit dephasing noise.
This graph also shows that the difference in the expectation values decays exponentially in $d(A,C)$.
Hence, under various local noise models, our LVP protocol has the ability to mitigate an effect of error as much as the original version of FVP, with exponentially small bias regarding $d(A,C)$.

\end{document}